\newtheorem{theorem}{Theorem}
\newtheorem{conjecture}{Conjecture}
\newtheorem{remark}{Remark}
\newtheorem{proposition}{Proposition}
\newtheorem{definition}{Definition}
\newtheorem{assumption}{Assumption}
\newcommand{\Q}{\overline{Q}}
\newcommand{\UL}{\overline{U}}
\newcommand{\T}{\overline{T}}
\newcommand{\B}{\overline{B}}
\newcommand{\I}{\overline{I}}
\newcommand{\V}{\overline{V}}
\newcommand{\bareta}{\bar{\eta}}
\newcommand{\E}{\mathbb{E}}
\newcommand{\pp}{\mathbb{P}}
\newcommand{\bydef}{:=}
\title{Stability and Optimization of Speculative Queueing Networks}
\date{}
\author{
J. Anselmi$^1$ and N. Walton$^2$\\
{\small $^1$: Univ. Grenoble Alpes, CNRS, Inria, Grenoble INP, LIG, 38000 Grenoble, France.}\\
{\small $^2$: University of Manchester, UK }\\
jonatha.anselmi@inria.fr, ~neil.walton@manchester.ac.uk
}
\begin{document}
\maketitle

\begin{abstract}
We provide a queueing-theoretic framework for job replication schemes based on the principle ``\emph{replicate a job as soon as the system detects it as a \emph{straggler}}''. This is called job \emph{speculation}. Recent works have analyzed {replication} on arrival, which we refer to as \emph{replication}. Replication is motivated by its implementation in Google's BigTable. However, systems such as Apache Spark and Hadoop MapReduce implement speculative job execution. The performance and optimization of speculative job execution is not well understood.
To this end, we propose a queueing network model for load balancing where each server can speculate on the execution time of a job. 
Specifically, each job is initially assigned to a single server by a frontend dispatcher. 
Then, when its execution begins, the server sets a timeout. 
If the job completes before the timeout, it leaves the network, otherwise the job is terminated and relaunched or resumed at another server where it will complete. 
We provide a necessary and sufficient condition for the stability of speculative queueing networks with heterogeneous servers, general job sizes and scheduling disciplines. We find that speculation can increase the stability region of the network when compared with standard load balancing models and replication schemes. We provide general conditions under which timeouts increase the size of the stability region and derive a formula for the optimal speculation time, i.e., the timeout that minimizes the load induced through speculation. We compare speculation with redundant-$d$ and redundant-to-idle-queue-$d$ rules under an $S\& X$ model. For light loaded systems, redundancy schemes provide better response times. However, for moderate to heavy loadings, redundancy schemes can lose capacity and have markedly worse response times when compared with the proposed speculative scheme.
\end{abstract}

\section{Introduction}

Jobs taking longer than expected to complete, so-called \emph{stragglers}, severely impact the performance of computer systems. 
Initial work on MapReduce note that stragglers have a significant impact on performance and occur for reasons that do not necessarily depend on the magnitude of the job's computational requirements but instead on the configuration and runtime of the underlying platform \cite{dean2008mapreduce}. 
Causes for straggling jobs include run-time contention phenomena among CPU cores, processor caches, memory bandwidth, network bandwidth~\cite{xu13,USENIX12}; 
bugs and errors in code \cite{dean2008mapreduce};
unfortunate disk seek times or head locations~\cite{Hai2005}; or, energy requirements, power and temperature constraints, and maintenance activities \cite{Google}. 
All of which can cause a job to straggle significantly beyond its inherent computational requirement.

To mitigate the effect of stragglers on system-wide performance, researchers are currently proposing redundant, or {hedged}, requests \cite{Ananthanarayanan10,Google,Hai2005,Vulimiri2013,hopper,Ananthanarayanan13}.
In this respect, there are two underlying principles: 
either replicate 
\begin{quotation}
``\emph{replicate a job upon its arrival and use the results from whichever replica responds first}'',~\cite{Google};
\end{quotation}
or speculate
\begin{quotation}
``\emph{replicate a job as soon as the system detects it as a straggler}'', \cite{Ananthanarayanan10}.
\end{quotation}

The former is referred to as job replication, and here all redundant replicas are usually canceled as soon as one completes or starts service. The latter is referred to as job speculation~\cite{zaharia2008improving}, and here long jobs are usually killed after some time and replicated elsewhere. 
Clearly, the straggler detection monitor, or \emph{timeout} rule, needs to be smart enough to distinguish between jobs whose intrinsic size is large and jobs that are taking longer than expected to complete because of some unfortunate runtime phenomenon.
There is increased interest in replication as it is employed by Google's BigTable, while the speculation principle is used in Apache Spark and Hadoop MapReduce.\footnote{For instance, Apache Spark applies the attribute \texttt{spark.speculation} while Hadoop MapReduce uses task configuration \texttt{mapred.map.tasks.speculative.execution}.} For instance, it was reported in \cite{hopper} that speculative tasks account for 25\% of all tasks in Facebook's Hadoop cluster.

Both principles provide effective mitigation techniques for the straggler problem but which one is preferable is not currently clear. Replication may lead to significant additional resource costs while speculation may increase latency.
One advantage of speculation is that only the long jobs are replicated, though this may come at the cost of an increased latency because a cloned job is executed after the system detected it as a straggler. 
Our objective is to provide the first theoretical framework for speculation in a stochastic and dynamic setting. We then use this to understand the impact of speculation and how to optimize its performance.

Recently, a number of theoretical works appeared in the literature for evaluating the performance of replication-based schemes in a stochastic and dynamic setting; see Section~\ref{literature} for a review.
These consider systems that replicate on arrival.
Speculation, however, has received far less attention.

\subsection{Model, Results and Contributions}

In this paper, we develop the first theoretical framework for evaluating the delay performance achieved by the speculation principle in  a queueing network.
We consider a stochastic system of parallel queues, each with its own server. Upon arrival, each job is dispatched to a random queue and, when a job's service is first initiated, a timeout is set. If the timeout is reached and the job is not completed, then the job is killed and re-routed to another randomly selected queue where it must be completed. 


Our first result, Theorem~\ref{th1}, characterizes the \emph{stability} region induced by speculation.
We show that speculative queueing networks are stable (positive Harris recurrent\footnote{We point the reader unfamiliar with positive Harris recurrence to \cite[Section~3]{dai1995positive}}) under the usual stability condition: that is, 
the nominal load at each queue is less than one.
This result is non-trivial because: i) our network includes feedback mechanisms with mixture of different job size distributions; ii) we consider a general class of work-conserving scheduling disciplines and service time distributions; and iii) the stability region found differs from the one achieved by standard load balancing schemes such as join-the-shortest-queue. It is well known that under i) and ii) the ``usual" stability condition may not be sufficient for positive recurrence~\cite{bramson1994,328805,gamarnik2005}.

Since timeouts induce a controllable constraint on the processing time of each job, we then consider the problem of optimal timeout design. 
First, we note that speculation can increase the system load and thus decrease the stability region, because part of the job is effectively served twice. Therefore, in Theorem \ref{th3}, we derive criteria for speculation to increase the stability region and show that these criteria are satisfied in realistic scenarios.
Then, we investigate optimal design and characterize the timeout that maximizes the size of the stability region in terms of a non-linear one-dimensional equation given in Theorem~\ref{th_optimal_timeout}. This optimal timeout can be easily solved numerically and in some cases also analytically.

Overall, we use our results to investigate the advantage of speculation by performing two comparisons:
\begin{itemize}
\item  First, we compare speculative and \emph{standard} load balancing in the stationary regime; ``standard'' means no replication and no speculation, though dynamic information between servers and dispatcher(s) can be exchanged, as in power-of-$d$. 
We establish when it is advantageous to introduce timeouts as means to improve performance in an existing standard load balancing system. 
An immediate consequence of Theorem~\ref{th1} is that a timeout increases the stability region if and only if the expected remaining time of the current execution is greater than the conditional expected service time at the second queue. 
We find large sets of timeouts having this property for important classes of service time distributions; e.g., Pareto, hyper-exponential and bimodal, see Section~\ref{sec:comp_bernoulli}.
We can then refine this analysis to give the optimal timeout for a given distribution in Theorem~\ref{th_optimal_timeout}.

\item 
Second, we compare speculation and replication in the stationary regime.
This comparison is difficult to perform analytically because a satisfactory characterization of the stability region induced by replication is currently unknown even when considering homogeneous servers, outside special cases; see \cite{Borst2020,MENDELSON2021113,Raaijmakers2019,anton2019stability} and Section~\ref{literature} for further details. However, we can provide reasoning and scenarios where speculation will improve response times when compared with replication. In particular, replication is bad when the shortest replica is stuck in queue, and instead service is completed by serving one or more larger replicas. 
The reasoning bears out in numerical simulations.
Although replication provides better average response times in a light-load regime, which is to be expected, a large number of simulations indicate that speculative load balancing with the optimal timeout yields an increased stability region and short response times over a wide range of loads.
\end{itemize}

Our work is the first developing either of the above comparisons.
Concisely, in terms of stability we find that speculative load balancing can be more advantageous than standard load balancing and replication.

\subsection{Organization}

The remainder of the paper is organized as follows.
In Section~\ref{literature}, we review the relevant literature.
In Section~\ref{sec:model}, we introduce speculative queueing networks. In Section \ref{sec:stability}, we present our stability theorem. 
In Section~\ref{sec:timeout_design}, we address the problem of designing optimal timeouts and compare the stability region achieved within our approach with standard load balancing and existing replication schemes.
In Section~\ref{sec:Large_Response}, we present a conjecture about the mean waiting time obtained when servers are homogeneous.
Finally, Section~\ref{sec:conclusions} draws the conclusions of this work and outlines further research.

\section{Literature Review}
\label{literature}
Over the last decade there has been interest in the performance analysis of policies that dispatch to the shortest among a set of queues, for instance, power-of-two-choices and join-the-idle-queue; see~\cite{van2018scalable} for a survey.
However, more recently there has been an increased focus on job replication and in the following, we review existing replication approaches; for further details, we refer the reader to the recent survey~\cite{gardner2020product}.

Within replication, multiple replicas of a job are placed at different servers and redundant replicas are canceled, either when the first replica completes service (cancel-on-completion) or when the first replica initiates service (cancel-on-start). 
It is striking that both cancel-on-start and cancel-on-complete redundancy models are analytically tractable under independent exponentially distributed job sizes. This is first shown for cancel-on-complete in \cite{gardner2016queueing}, where the reversibility results of the order-independent queue\cite{berezner1995quasi} are applied. 
The stationary distribution for cancel-on-start is found in \cite{ayesta2019redundancy}, which applies results from~\cite{visschers2012product}.
The paper \cite{joshi2017efficient} is the first to analyze the latency and computing cost of the cancel-on-start and cancel-on-finish redundancy strategies.
Subsequent works expand these results and thus our understanding of redundancy models~\cite{gardner2017redundancy}.
The stationary distributions found are fully explicit; however, it is often hard to derive tractable formulas for key performance metrics. 
For this reason, existing works develop diffusion limits and mean field approximations for these models \cite{cardinaels2020redundancy,hellemans2020heavy}.
Under i.i.d. exponential replica sizes, it is noted that there is neither loss nor gain of capacity due to redundancy job replication \cite{bonald2017balanced}. However, it is also noted in \cite{bettermodelTON} that the independent exponential distributed job sizes is not a reasonable assumption, as job sizes will be correlated and slow service tends to result in heavy tailed service times. 
For this reason, works investigate the impact of replication under a variety of job size distributions and service disciplines~\cite{hellemans2019performance}. The results in~\cite{Borst2020,anton2019stability} find that in general capacity is lost due to increased utilization of redundant jobs, though a full understanding of the stability region induced by cancel-on-completion schemes is currently unclear; see also~\cite{MENDELSON2021113}. 

Another similar approach is delayed replication. Here, each job is sent to an arbitrary server and if the server has not finished service of this job within some fixed time that includes both waiting and service, the job is replicated elsewhere.
From the point of view of a single queue, this approach is somewhat equivalent to a queueing system with abandonments or reneging~\cite{Haight1959}. Delayed replication is investigated in~\cite{delayedrep} for homogeneous servers and under the assumption of ``asymptotic independence'', though the structure of the resulting stability region is not explicited.

While several recent works investigate replication, studies of speculation are largely empirical and analyze the performance and dependencies of specific platform implementations~\cite{Ananthanarayanan10,Google,Hai2005,Vulimiri2013,hopper,Ananthanarayanan13,hopper}. 
%
An early analysis of timeouts in queueing is given by \cite{TAGS2002}, where all jobs initially join a single queue and then jobs that timeout are sequentially sent along a line of further queues.
Some existing theoretical works focus on static settings with a finite number of jobs and no queueing~\cite{8884664,Joshi19,101145}.

The queueing and stability behavior of job speculation is not well-understood and we investigate its impact.
We investigate stability using the fluid limit approach of Dai~\cite{dai1995positive,bramson2008stability}. We note that job sizes at different queues change due to both timeouts and correlations between jobs. In general, such networks can exhibit non-standard stability behavior; see, for instance, Bramson \cite{bramson1994}. Although stability is non-trivial for these networks, we find that stability still holds when the nominal load is below one. This is important as the nominal load can be reduced through the use of speculation. 

Further, we characterize the optimal timeout for a speculative load-balancing network. 
Again there are few works that investigate optimal timeouts, though we note that the paper~\cite{xu2016optimization} provide upper and lower bounds for a constrained utility optimization. This is then approximated for a timeout optimization under independent task-based service time distributions and in the presence of idle servers. In contrast, we consider a queueing framework under an S\&X model where the optimal timeout minimizes load and thus maximizes stability. The use of Markov decision processes and optimal stopping theory~\cite{peskir2006optimal} analysis appears to be somewhat new to this line of literature.
We note the contemporaneous work of \cite{joshi2020synergy} consider a discrete time MDP formulation of optimal replication. This approach (along  with associated implications for reinforcement learning and load balancing) would seem an important area of future investigation.

%


\section{Speculative Queueing Network}
\label{sec:model}

We consider a network composed of~$N$ queues working in parallel. Queues have an infinite buffer, are work-conserving and are heterogeneous, in the sense that they process jobs at possibly different rates. Specifically, queue~$i$ processes jobs with rate $\mu_i>0$, $i=1,\ldots,N$, and without loss of generality we assume that $\sum_{i=1}^N \mu_i=d$.
Jobs join the network via an exogenous Poisson process with rate $\lambda N$ and are initially dispatched to queue~$i$ independently with probability $p_{0,i}$.
When a job starts its execution on its designated queue, say $i$, a clock is initialized and a \emph{timeout}, $\tau_i$, is set. 
If the timeout is reached and the job has not completed, then it is killed and re-routed independently to another queue where it is either resumed or re-executed from scratch.
At this second queue, the job is then served until complete and leaves the network afterwards.

\subsection{Dynamics and Assumptions}
\label{sec:notation}

Let $\tau\bydef(\tau_1,\ldots,\tau_N)\in\mathbb{R}_+^d$ denote the timeout vector.
The $n$-th job that joins the network has associated the random variables:
$\xi(n)\in\mathbb{R}_+$, the interarrival time between the $n$-th and $(n-1)$-th jobs; 
$\eta(n)\bydef(\eta_1(n),\eta_2(n))$, the service requirements of job $n$ where $\eta_j(n)\in\mathbb{R}_+$ represents the service time associated to its $j$-th execution if processed at unit rate;
$\ell(n)\bydef(\ell_1(n),\ell_2(n))$, where $\ell_j(n)\in\{1,\ldots,N\}$ denotes the queue associated to the $j$-th visit performed by job $n$; and
$\tau(n)\bydef \tau_{\ell_1(n)}$, the timeout of job $n$.

On the event $$\frac{\eta_1(n)}{\mu_{\ell_1(n)}}\le \tau(n),$$
job $n$ completes service at $\ell_1(n)$ before the timeout $\tau(n)$ is reached and it leaves the network. In this case, the random variables $\eta_2(n)$ and $\ell_2(n)$ are not used. 
On the complementary event, job $n$ is re-routed to queue $\ell_2(n)$ where it receives service for $\frac{\eta_2(n)}{\mu_{\ell_2(n)}}$ time units and leaves the network afterwards.

Within this notation, the primitive sequences driving the dynamics of the stochastic network under investigation are 
$\{\xi (n), n\ge 1\}$, 
$\{\eta(n), n\ge 1\}$ and
$\{\ell(n), n\ge 1\}$, all defined on a common probability space.
We assume that these sequences are i.i.d. and mutually independent. 
We note however that, for fixed $n$, $\eta_1(n)$ and $\eta_2(n)$ may have arbitrary dependency. For simplicity of notation, we will refer to $\eta_1(1)$ and $\eta_2(1)$ simply as $\eta_1$ and $\eta_2$, respectively.
Because ~$\eta_1$ and~$\eta_2$ may be dependent and have a possibly different distribution, this allows us to models two different scenarios:
\begin{itemize}
 \item $\eta_1$ and $\eta_2$ are equal in distribution: upon re-routing, the job is killed and then re-started from scratch
 \item $\eta_2$ is stochastically smaller than $\eta_1$: upon re-routing, the job is stopped and then resumed.
\end{itemize}
The case of identical replicas, i.e., $\eta_1=\eta_2$ ($\omega$-per-$\omega$) is permitted but it should be clear that in this case speculation does not help because the amount of work to be done does not change after re-routing. 
We remark that speculation is motivated by the fact that something can go wrong on the servers' runtimes, i.e., a given job may have different service times if executed on different machines. This rules out the case of identical replicas. 
We also require that $\E[\eta_1(n)+\eta_2(n)]<\infty$  and that $\pp(\frac{\eta_1}{\mu_i}\le  \tau_i)>0$ for all $i$, which means that jobs complete and have a chance of completing before the timeout.


We assume that each $\xi(n)$ is exponentially distributed with mean $(\lambda N)^{-1}$, so that arrivals occur according to a Poisson process of rate $\lambda N$. 
For all $i,j\in\{1,\ldots,N\}$, let $p_{i,j}\bydef \pp(\ell_2(n)=j \mid \ell_1(n)=i  )$.
We assume that $p_{i,j}=p_{1,j}$ for all $i$, i.e., the internal routing probabilities depend on the destination~$j$ but not on the source~$i$. This assumption makes all jobs `homogeneous in terms of re-routing' and is required in the proof of Theorem~\ref{th3}; without this assumption, the proposed queueing network may be unstable even under the usual stability condition, as in~\cite{bramson1994}. 



Since queues speculate on service times, we refer to the proposed model as a \emph{speculative queueing network}.

\subsection{Scheduling Disciplines}
\label{sec:SD}

Each queue operates under a work-conserving head-of-the-line scheduling discipline. We recall that a service discipline is head-of-the-line if within each class at the queue jobs are served in order their arrival \cite{bramson2008stability}. 
This includes First-come first-served (FCFS), Head-of-the-line processor-sharing (HLPS) or class-based priority disciplines.
On the other hand, Processor-Sharing (PS) is excluded, though we believe that the results presented in this paper apply by a separate argument \cite[Section 3.3]{Kelly79}; due to page constraints, we do not discuss this.
In our framework, the class of a job refers to the queue that the job is at and whether the job has been timed-out or not; see Section~\ref{sec:multiclass_rep} for further details.
With a class-based priority discipline, jobs are ranked according to their class. The higher the rank, the higher the priority. Depending on whether the discipline is preemptive or not, the job in execution may be stopped if a job with higher priority arrives.
For instance, within a given queue, a job that has been re-routed may be given higher (or lower) priority than job that have currently visited only one queue.

We observe that some scheduling disciplines may process more than one job at a time.
In this case, the clock associated to the execution of each job does not increase linearly over time, but in relation to the processing capacity devoted to the execution of the job itself.

\subsection{Further Notation}

We use the convention that products (resp. sums) over empty sets are one (resp. zero). 
The set of non-negative real numbers is denoted by $\mathbb{R}_+$.
The indicator function of $A$ is denoted by $\mathbb{I}_A$.
We use $|\cdot|$ to denote set cardinality
and $\|\cdot\|$ to denote the $L_1$ norm.
We let $p_0\bydef (p_{0,i}: i=1,...,N)$ and $p_1\bydef (p_{1,i}: i=1,...,N)$ denote the routing probability vectors upon job arrival in the network and re-routing, respectively.
Unless specified otherwise, indices $i$ and $j$ will range over~$\{1,\ldots,N\}$.
For $a,b\in\mathbb{R}$, $(a)^+=\max\{a,0\}$, $a\wedge b = \min\{ a, b\}$ and $a\vee b = \max\{a,b\}$.
For a differentiable function $t\mapsto f(t)$, $f'(t)$ denotes its derivative in~$t$.

\section{Stability}\label{sec:stability}

In this section, we first provide a multiclass representation of the speculative queueing network introduced in previous section and show that this representation belongs to the family of queueing networks investigated in~\cite{dai1995positive}. 
This will allow us to use Theorem 4.2 of~\cite{dai1995positive}, which provides a criterion to establish the stability (positive Harris recurrence) of the Markov process, say~$X$, describing the dynamics of the network under investigation. This criterion is expressed in terms of an associated \emph{fluid model}, which we define below.

\subsection{Multiclass Representation}
\label{sec:multiclass_rep}

We now provide a multiclass representation of our queueing network described in Section~\ref{sec:notation} where classes are used to distinguish between jobs that have timeout or not. The advantage of this is that we can directly represent jobs within the framework of Dai~\cite{dai1995positive} and Bramson~\cite{bramson2008stability}. Thus we can analyze our fluid model as a fluid limit as expressed within that framework. 



Let us consider the set of classes
$\mathcal{K}\bydef \mathcal{K}_1 \,\cup \mathcal{K}_2$ 
where 
$\mathcal{K}_1\bydef \{(k,i): k\in\{c,u\}, 1\le i \le N\}$ and $\mathcal{K}_2\bydef \{(i,j): 1\le i, j\le N\}$ represent the set of exogenous and endogenous classes, respectively. 
Further, we use `$c$' for jobs that will complete service at their first queue, and~`$u$' for jobs that will receive uncompleted service at their first queue (and thus will timeout).
On the same probability space used to define the speculative queueing network above,
we now construct a queueing network composed of such classes.

Jobs enter the network through class $(c,i)\in \mathcal{K}_1$ or $(u,i)\in \mathcal{K}_1$ for $i=1,...,N$. These jobs join queue~$i$ and, after processing at~$i$, a class-$(c,i)$ job leaves the network, while a class-$(u,i)$ job is re-routed and joins queue $j$ as a class $(i,j)\in \mathcal K_2$ job with probability~$p_{i,j}$. 

The inter-arrival times of jobs in class $(c,i)\in \mathcal{K}_1$ follow a Poisson process that is obtained by thinning
the Poisson process associated to $\{\xi (n), n\ge 1\}$ with respect to the Bernoulli process~$\{ B_n^{c,i},n\ge 1 \}$ where
\begin{align*}
B_n^{c,i}\bydef \mathbb{I}_{\left\{\ell_1(n)=i,\,\tfrac{\eta_1(n)}{\mu_i}\le \tau_i\right\}} \, .
\end{align*}
Under the Poisson thinning (aka splitting) property, the set of service times of class-$(c,i)$ jobs can be equivalently obtained by sampling independently from the distribution of~$(\frac{\eta_1}{\mu_i} | \frac{\eta_1}{\mu_i} \leq \tau_i)$. 
Similarly, the interarrival times of jobs in class $(u,i)\in \mathcal{K}_1$ follow a Poisson process that is obtained by thinning the Poisson process associated to $\{\xi (n), n\ge 1\}$ with respect to the Bernoulli process $\{ B_n^{u,i},n\ge 1 \}$
where
\begin{align*}
B_n^{u,i}\bydef \mathbb{I}_{\left\{\ell_1(n)=i,\,\tfrac{\eta_1(n)}{\mu_i}> \tau_i\right\}}
\end{align*}
and they have deterministic service times equal to $\tau_i$.

Under the Poisson thinning property, the set of service times of class-$(i,j)$ jobs can be equivalently obtained by sampling independently from the distribution of~$(\frac{\eta_2}{\mu_j} | \frac{\eta_1}{\mu_i} > \tau_i)$. 


The dynamics (routing decisions, arrival and service times) of the multiclass network above are equivalent to the dynamics of the speculative queueing network introduced in Section~\ref{sec:model}. Equivalences of this type are commonly applied in the analysis of queueing networks; for instance, see Section 2.7 of \cite{dai_harrison_2020}.


\subsection{Markov and Fluid Model}

We now define a continuous-time Markov process $X(t)$ that describes the dynamics of the multiclass queueing network described above.
%
%
%
%
Specifically, 
taken to be right continuous, consider
the pair
\begin{equation}
\label{QUV}
X(t)\bydef (\mathbb{Q}(t), V(t)).
\end{equation}
Here,
$\mathbb{Q}(t)=(\mathbb{Q}_1(t),\ldots,\mathbb{Q}_N(t))$ and $$\mathbb{Q}_i(t)=(k_{i,1},k_{i,2},...,k_{i,Q_i(t)})$$ 
where $k_{i,n} \in \mathcal K$ gives the class of the $n$-th job in queue $i$ and $Q_i(t)$ is the total number of jobs at queue~$i$. This captures how jobs are lined up in queue~$i$.
Also 
$V(t)=(V_k(t):k\in\mathcal{K})$ with $V_k(t)$ denoting the remaining service time of the class-$k$ job in execution and with the convention that $V_k(t)=0$ if such job does not exist.

The process $X=\{X(t), t\ge 0\}$, living on state space $\mathcal{X}$, is a piecewise deterministic Markov process (PDMP) and satisfies the strong Markov property; see page 362 in~\cite{Davis}.
If $X$ is positive Harris recurrent, we recall that a unique ergodic stationary distribution $\pi$ exists and
\begin{equation}
\lim_{t\to\infty} \frac{1}{t}\int_0^t Q_i(s)\, {\rm d}s = \int_{\mathcal{X}} q_i(x) \, \pi({\rm d}x)
\end{equation}
almost surely, for any initial configuration, where $Q_i(t)$ is the queue length of server $i$ at
time $t$ and $q_i(x)$ is the queue length of server $i$ for network state $x\in\mathcal{X}$.

Now, we define a fluid model for the dynamics of the queueing network under investigation. 
For $t\in\mathbb{R}_+$, let 
$\Q(t)=(\Q_k(t),k\in\mathcal{K})$, $\T(t)=(\T_k(t),k\in\mathcal{K})$ and $\I(t)=(\I_i(t),i=1,\ldots,N)$
and consider the following conditions
\begin{align}
%
\label{eq:FS1} &\Q_{c,i}(t) = \Q_{c,i}(0)  + \lambda d p_{0,i} \pp(\eta_1\le \tau_i\mu_i)\, t   - \frac{( \T_{c,i}(t) - \V_{c,i} )^+}{\E[\frac{\eta_1}{\mu_i}\mid \frac{\eta_1}{\mu_i}\le \tau_i]}  \ge 0\\
\label{eq:Q_ui}
&\Q_{u,i}(t) = \Q_{u,i}(0)  + \lambda d p_{0,i} \pp(\eta_1> \tau_i\mu_i)\, t  - \frac{( \T_{u,i}(t) - \V_{u,i} )^+}{\tau_i}   \ge 0 \\
\label{eq:FS2}
&\Q_{j,i}(t) = \Q_{j,i}(0)  - \frac{( \T_{j,i}(t) - \V_{j,i} )^+}{\E[\frac{\eta_2}{\mu_i}\mid\frac{\eta_1}{\mu_j}>\tau_j]}  + \frac{ p_{j,i}}{\tau_j}( \T_{u,j}(t) - \V_{u,j} )^+  \ge 0,\,\quad\forall (j,i)\in\mathcal{K}_2\\
\label{eq:Tover}
&\T_k(t) \mbox{ is nondecreasing and starts from zero} ,\quad \forall k\in\mathcal{K}\\
\label{eq:Iover}
&\I_i(t) = t -  T_{c,i}(t)- \bar T_{u,i}(t) - \sum_{j\neq i} \T_{j,i}(t) \mbox{ is nondecreasing} \\
\label{eq:FSL}
&\int_0^\infty \bigg( \Q_{c,i}(t) +\Q_{u,i}(t) + \sum_{j\neq i} \Q_{j,i}(t) \bigg)\,  {\rm d} \I_i(t) =0 
\end{align}
for all $i \in\{1,\ldots,N\}$
where $(x)^+=\max(x,0)$ and $\overline{V} \in\mathbb{R}_+^{|\mathcal{K}|}$ is interpreted as the remaining service time vector.
These conditions define our \emph{fluid solutions}. A \emph{fluid solution} is any solution $(\Q(\cdot),\T(\cdot))$ to \eqref{eq:FS1}-\eqref{eq:FSL}.
The quantities in \eqref{eq:FS1}-\eqref{eq:Iover} have the following interpretation:
at the fluid scale, $\Q_k(t)$ is the amount of jobs of class $k$ at time $t$,
$\T_k(t)$ is the cumulative time dedicated to the processing of class $k$ jobs by time $t$, and
$\I_i(t)$ is the cumulative idle time of queue~$i$ by time~$t$.
For instance, Equation~\eqref{eq:Q_ui} says that the fluid amount of class-$(u,i)$ jobs increases with rate $\lambda d p_{0,i} \pp(\eta_1> \tau_i\mu_i)$ due to external arrivals and decreases by $\frac{( \T_{u,i}(t) - \V_{u,i} )^+}{\tau_i}$ as these jobs stay in queue $i$ for $\tau_i$ time units.
Similar interpretations are easily obtained for $\Q_{c,i}(t)$ and $\Q_{j,i}(t)$.
Equation~\eqref{eq:FSL} is therefore interpreted as the work-conserving condition.

\begin{remark}
\label{rem00}
The set of fluid solutions associated to a speculative queueing network is identical to the set of fluid solutions associated to its corresponding multiclass queueing network where the routing and service processes are independent; see Formulas (4.17)-(4.21) in~\cite{dai1995positive}.
\end{remark}

From the above remark and from Theorem 4.1 of \cite{dai1995positive}, it follows that the fluid model equations are the limit equations satisfied by a speculative queueing network. We refer the reader to Theorem 4.1 for a formal statement and proof of this fluid limit argument, from which our stated fluid model follows as a direct consequence. 

%


We now define stability for the \emph{fluid model}.
\begin{definition}
\label{def2}
We say that the fluid model is \emph{stable} if there exists a constant $\delta>0$ such that 
for any fluid solution with $\|\Q(0)\|+\|\UL\|+\|\V\|=1$, it holds that
$\Q(t+\delta)=0$ for all $t >0$.
\end{definition}
The constant $\delta$ above may depend on all model parameters
but not on the initial state. 

%
%


The following theorem 
connects fluid model stability with $X(t)$ being positive Harris recurrent
and was first proved in \cite{dai1995positive}.
\begin{proposition}[Dai \cite{dai1995positive}; Bramson \cite{bramson2008stability}]
\label{thDAI}
Assume that \eqref{eq:FS1}-\eqref{eq:FSL} hold. If {the fluid model solutions is stable}, then $X(t)$ is positive Harris recurrent.
\end{proposition}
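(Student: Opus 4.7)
The plan is to deduce the result as a direct application of Theorem~4.2 of \cite{dai1995positive} via the multiclass representation set up in Section~\ref{sec:multiclass_rep}. The overall strategy is: (i)~identify the speculative queueing network with a standard multiclass HL network in Dai's sense, (ii)~verify that the primitive sequences (interarrivals, service times, routing) satisfy the independence and integrability hypotheses required by that framework, and (iii)~check that the fluid model equations~\eqref{eq:FS1}--\eqref{eq:FSL} coincide with the fluid limit equations produced by Dai's construction, so that Dai's theorem applies verbatim.

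First, I would argue that $X(t)=(\mathbb{Q}(t),V(t))$ is a well-defined piecewise-deterministic Markov process on $\mathcal{X}$ satisfying the strong Markov property, as already noted after~\eqref{QUV} and following \cite{Davis}. Then I would use the multiclass network from Section~\ref{sec:multiclass_rep} to rewrite the dynamics class-by-class. For each exogenous class $(c,i)$ and $(u,i)$, the Poisson thinning identities given after the definition of $B_n^{c,i}$ and $B_n^{u,i}$ show that arrivals form independent Poisson processes with rates $\lambda N p_{0,i}\pp(\eta_1\leq\tau_i\mu_i)$ and $\lambda N p_{0,i}\pp(\eta_1>\tau_i\mu_i)$, respectively, and that their associated service-time sequences are i.i.d.\ with distributions $(\eta_1/\mu_i\mid\eta_1/\mu_i\leq\tau_i)$ and the deterministic value~$\tau_i$. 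For endogenous classes $(j,i)$, routing is Markovian with probabilities $p_{j,i}$, independent of the rest of the history, and service times are i.i.d.\ with distribution $(\eta_2/\mu_i\mid\eta_1/\mu_j>\tau_j)$. The hypothesis $\E[\eta_1+\eta_2]<\infty$ and $\pp(\eta_1/\mu_i\leq\tau_i)>0$ then yield the integrability and non-degeneracy conditions required in \cite{dai1995positive}.

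Next I would invoke Theorem~4.1 of \cite{dai1995positive} (or its refined multiclass version in \cite{bramson2008stability}) to obtain, along any subsequence of fluid-scaled processes $\bar X^{(n)}(t)=n^{-1}X_n(nt)$, a limit that satisfies a system of fluid equations of exactly the form (4.17)--(4.21) in \cite{dai1995positive}. By Remark~\ref{rem00}, these coincide with our equations~\eqref{eq:FS1}--\eqref{eq:FSL} once we match coefficients: arrival rates $\lambda dp_{0,i}\pp(\cdot)$, mean service times $\E[\eta_1/\mu_i\mid\cdot]$, $\tau_i$ and $\E[\eta_2/\mu_i\mid\cdot]$, and routing $p_{j,i}$ appear as the unique fluid-scale coefficients produced by the functional strong law applied to the thinned primitive sequences. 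The work-conservation identity~\eqref{eq:FSL} follows from the HL assumption on the scheduling discipline (Section~\ref{sec:SD}) together with Lemma~4.2 of \cite{dai1995positive}.

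The main obstacle is the bookkeeping in step~(iii): making sure every term in \eqref{eq:FS1}--\eqref{eq:Iover} genuinely arises as the a.s.\ fluid limit of the corresponding pre-limit quantity, including the initial remaining-service contribution $\V$ and the correct handling of the class $(u,i)$ whose service is deterministic. Once this identification is made, Definition~\ref{def2} matches the fluid stability notion in \cite{dai1995positive} and Theorem~4.2 there (respectively, the reformulation in Chapter~4 of \cite{bramson2008stability}) immediately yields that $X(t)$ is positive Harris recurrent, completing the proof.
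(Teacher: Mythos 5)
Your plan is essentially the paper's argument: use the multiclass representation of Section~\ref{sec:multiclass_rep} to restore the i.i.d./independent-routing structure that Dai's and Bramson's framework requires, match the fluid model~\eqref{eq:FS1}--\eqref{eq:FSL} to the one that arises as a fluid limit, and then invoke the fluid-stability-implies-positive-Harris-recurrence theorem (Theorem~4.2 of \cite{dai1995positive}, equivalently Theorem~4.16 of \cite{bramson2008stability}, which is what the paper's appendix actually cites). The one substantive thing you leave unaddressed is the \emph{petite set} hypothesis: the Dai/Bramson criterion is not ``fluid stability implies positive Harris recurrence'' unconditionally, but under the additional requirement that closed bounded sets of the state space are petite for the Markov process $X$. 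The paper verifies this explicitly, using the head-of-the-line assumption on the scheduling disciplines together with Proposition~4.8 of \cite{bramson2008stability}; it also points out (citing the remark after Proposition~2.1 of \cite{dai1995positive}) that the petite-set notion survives the dependence between service and routing in the original, pre-multiclass description. Your sketch mentions ``independence and integrability hypotheses'' in passing but never names or checks the petite-set condition, and since that is the technical step that makes the theorem you quote applicable, you should spell it out. With that addition your argument and the paper's coincide.
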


The proposition follows from Theorem 4.16 of Bramson \cite{bramson2008stability}. In the Appendix, we give a proof that essentially explains how our model fits  within the framework in \cite{bramson2008stability}. We note that there are some aspects of our model that are non-standard within the usual fluid stability framework: i) jobs are routed according to their size rather than according to an independent mechanism, and ii) the size of jobs within a queue can have a different probability distribution (jobs having non-identical distributions within a multi-class queue can impact stability, see the well-known counter-example of Bramson \cite{bramson1994}). However, in our case, it is possible to segment arrivals into an extended set of job classes as described in Section~\ref{sec:multiclass_rep}. This yields the required independence properties for routing probabilities for jobs within each class. This deals with part i) above. For part ii), we note that fluid stability remains sufficient for positive recurrence so long as we have a Markovian state description and head-of-the-line service within each class. Each of these apply to our model and therefore what remains is to prove the stability of the fluid model. This result is stated~below.

%

\subsection{Fluid Stability and Positive Harris Recurrence}

We now state our main result on stability. 
Let
\begin{equation}
\label{eq:rho_def}
\rho_i 
\bydef \lambda N p_{0,i} \E\left[\frac{\eta_1}{\mu_i}\wedge \tau_i\right]   +
 \lambda N \sum_{j=1}^N 
p_{0,j} p_{1,i} 
\pp(\eta_1>\mu_j\tau_j) 
\E\left[\frac{\eta_2}{\mu_i}\,\Big|\,\frac{\eta_1}{\mu_j}>\tau_j\right] 
\end{equation}
be the \emph{nominal load} of queue~$i$, for all~$i$. This accounts for the work from both speculative and non-speculative jobs.

The following result shows that the Markov process of interest is stable under the condition that $\rho_i<1$ for all $i$.
Our proof is based on Proposition~\ref{thDAI} and on a Lyapunov argument; see the Appendix.

\begin{theorem}
\label{th1}
For all $i$, assume that $\rho_i<1$.
Then, $X$ is positive Harris recurrent. 
\end{theorem}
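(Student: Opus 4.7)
The plan is to invoke Proposition~\ref{thDAI}, which reduces positive Harris recurrence of $X$ to stability of the associated fluid model in the sense of Definition~\ref{def2}. So the goal is to produce $\delta>0$ such that every fluid solution satisfying \eqref{eq:FS1}--\eqref{eq:FSL} with $\|\Q(0)\|+\|\UL\|+\|\V\|=1$ obeys $\Q(t)\equiv 0$ for all $t\ge\delta$.

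To prove fluid stability I would use a Lyapunov argument. Let $W_i(t)$ denote the total fluid work at queue~$i$, namely the sum of $\Q_k(t)$ over classes $k$ present at queue~$i$ weighted by its mean service time ($\alpha_{c,i}\bydef\E[\eta_1/\mu_i\mid \eta_1/\mu_i\le\tau_i]$ for class $(c,i)$, $\tau_i$ for class $(u,i)$, and $\alpha_{j,i}\bydef\E[\eta_2/\mu_i\mid \eta_1/\mu_j>\tau_j]$ for class $(j,i)$). Setting $E_j\bydef\sum_i p_{j,i}\alpha_{j,i}$, the expected work a timing-out class-$(u,j)$ job will generate downstream, consider
\begin{equation*}
L(t) \bydef \sum_{i} W_i(t) \;+\; \sum_{j} E_j\,\Q_{u,j}(t).
\end{equation*}
Differentiating along fluid trajectories, the $\dot\T_{u,j}(t)$ contributions cancel between the two sums (they enter $\sum_i\dot W_i$ with aggregate coefficient $E_j/\tau_j$ and $E_j\dot\Q_{u,j}$ with coefficient $-E_j/\tau_j$), and the work-conserving identity $\sum_k \dot\T_{k,i}(t)=\mathbb{I}_{W_i(t)>0}$ from \eqref{eq:Iover}--\eqref{eq:FSL} yields
\begin{equation*}
\dot L(t) \;=\; \sum_{i}\rho_i \;-\; \sum_{i}\mathbb{I}_{W_i(t)>0},
\end{equation*}
using the identity $\sum_i\rho_i=\sum_i\lambda Np_{0,i}\E[\eta_1/\mu_i\wedge\tau_i]+\sum_j\lambda Np_{0,j}\pp(\eta_1>\tau_j\mu_j)E_j$, which in turn relies on the homogeneous re-routing assumption $p_{i,j}=p_{1,j}$.

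The main obstacle is that the naive bound $\dot L(t)\le \sum_i\rho_i-1$ (valid whenever some queue is busy) only delivers stability under the stronger condition $\sum_i\rho_i<1$, not under our per-queue hypothesis $\rho_i<1$. To close this gap I would refine the argument queue by queue, exploiting the feedforward class structure (the level-$1$ classes $(c,i),(u,i)$ feed the level-$2$ classes $(j,i)$ and the feedback graph is acyclic). Integrating \eqref{eq:Q_ui} and using $\Q_{u,j}(t)\ge 0$ gives the input bound $(\T_{u,j}(t)-\V_{u,j})^+/\tau_j\le a'_j t+\Q_{u,j}(0)$ with $a'_j\bydef\lambda Np_{0,j}\pp(\eta_1>\tau_j\mu_j)$, so the cumulative work arrival at queue~$i$ over any interval $[t_0,t]$ is bounded by $\rho_i(t-t_0)+C_i(t_0)$ with $C_i(t_0)\bydef\sum_j p_{j,i}\alpha_{j,i}\Q_{u,j}(t_0)$. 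Combined with the work-conserving service rate~$1$, any busy excursion of queue~$i$ starting at $t_0$ thus has length at most $(W_i(t_0)+C_i(t_0))/(1-\rho_i)$; a bootstrap across queues (well-founded because the class graph is acyclic) then produces the uniform~$\delta$. Two structural ingredients are essential throughout: the multiclass representation of Section~\ref{sec:multiclass_rep} segregates distinct conditional service-time distributions into independent classes, ruling out the Bramson-type counter-examples of~\cite{bramson1994}; and the assumption $p_{i,j}=p_{1,j}$ is precisely what aligns the aggregated feedback rates with the nominal loads~$\rho_i$.
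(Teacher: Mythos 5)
Your overall strategy — reduce to fluid stability via Proposition~\ref{thDAI} and run a Lyapunov argument, using the multiclass representation and the $p_{i,j}=p_{1,j}$ assumption — is the right one, and your drift computation for the aggregate function $L(t)$ is correct. You also correctly identify the real difficulty: that computation only yields $\dot L(t)\le \sum_i\rho_i-1$ when a single queue is busy, which does not close under the per-queue condition $\rho_i<1$.

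The ``bootstrap across queues'' you propose to close this gap has a genuine hole. Your busy-excursion bound for queue $i$ starting at $t_0$ is $\big(W_i(t_0)+C_i(t_0)\big)/(1-\rho_i)$, where $C_i(t_0)=\sum_j p_{j,i}\alpha_{j,i}\Q_{u,j}(t_0)$. The quantity $\Q_{u,j}(t_0)$ at the start of a later excursion is not controlled by the initial normalization $\|\Q(0)\|+\|\UL\|+\|\V\|=1$: the only a~priori bound is $\Q_{u,j}(t_0)\le \Q_{u,j}(0)+a'_j t_0$, which grows linearly. To make $C_i(t_0)$ bounded you would need to know that the $(u,j)$ classes empty and stay empty, but whether queue $j$ keeps serving class $(u,j)$ depends on how busy queue $j$ is with its level-two classes $(k,j)$, which in turn depend on the departures from $(u,k)$ at the other queues. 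The class graph is acyclic (every job visits at most two stations), but each \emph{station} holds both level-one and level-two classes, so there is no station-level topological order to bootstrap along; the dependency among queues is genuinely cyclic and your inductive ordering is not well-founded.

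The paper's proof closes this gap in a different way: it works with per-queue functions $G_i(t)$ that combine queue~$i$'s current workload with the anticipated feedback workload $\sum_j \E\!\left[\eta_2\,\big|\,\tfrac{\eta_1}{\mu_j}>\tau_j\right]\Q_{u,j}(t)$ coming to it from all stations, and then takes the \emph{maximum} $G(t)=\max_i G_i(t)$ rather than the sum. Two facts make this work: (i) whenever queue~$i$ is busy, $\dot G_i(t)\,p_{1,i}/\mu_i=\rho_i-1<0$ exactly; (ii) the anticipated-feedback term is literally the same for every $i$ (this is where $p_{i,j}=p_{1,j}$ and the rescaling by $\mu_i/p_{1,i}$ enter), so when queue~$i$ is empty $G_i$ reduces to that common term and is strictly dominated by $G_s$ for any busy $s$. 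Consequently, the maximizer of $G(\cdot)$ is always a busy queue, the max-of-Lyapunov Lemma~3.2 of Dai--Weiss gives $\dot G(t)<-\epsilon$ whenever $G(t)>0$, and $G$ hits zero in uniform time. Your per-queue excursion bound would need something like property~(ii) to tame $C_i(t_0)$, and supplying it would essentially reproduce the paper's $G_i$ construction; as written, the bootstrap step is missing its key ingredient.
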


Within the proposed queueing network, the mean service time of a job depends on the number of visits currently performed in the network and that the network topology is not feedforward. In these cases, it is well known that the usual stability condition is in general not sufficient to make the underlying Markov process positive Harris recurrent. A counterexample with Poisson arrivals, exponentially distributed service times and two FCFS queues is the reentrant line developed in \cite{bramson1994}; see also \cite{328805,gamarnik2005}. 

We note that the stability condition in Theorem~\ref{th1} differs from the ``usual'' condition $\lambda N p_{0,i} / \mu_i <1$, which is the nominal stability condition when no timeout is set (no speculation). It is thus natural to investigate which approach yields the largest stability region. This is one of the objectives of the following section.



\section{Timeout Design}
\label{sec:timeout_design}

In this section, we rely on Theorem~\ref{th1} to perform some optimization and understand the advantages of speculative load balancing.
Specifically, since timeouts and routing probabilities are under the control of the network manager, 
we investigate whether or not it is possible to design them in a manner such that the resulting stability region is larger than the stability region achieved by 
\begin{itemize}
\item[i)] standard load balancing; i.e., no replication and no speculation, though servers and dispatcher(s) can exchange dynamic control messages as in join-the-shortest-queue;
 
 \item[ii)] other replication schemes based on the principle ``replicate upon job arrivals''.
\end{itemize}

\begin{remark}
We will address the question above when servers are {homogeneous}, that is $\mu_i=1$ for all $i$.
The heterogeneous case can be handled as well but at the cost of complicating the exposition unnecessarily.
For this reason, we limit this application of Theorem~\ref{th1} to the homogeneous case only.
\end{remark}

Given $\tau\in\mathbb{R}_+$, let us define 
\begin{equation}
\label{eq:rho_def_symmetric}
\rho(\tau) \bydef \lambda \E[\eta_1\wedge\tau] + \lambda \pp(\eta_1>\tau) \E[\eta_2\mid\eta_1>\tau]
\end{equation}
and note that $\rho(\tau)=\rho_i$ if the routing probabilities are identical and all timeouts are equal to $\tau$.

The following proposition says that if servers are homogeneous, then it is optimal to choose identical timeouts and routing probabilities; see the Appendix for a proof.

\begin{proposition}
\label{prop:symmetric}
If $\mu_i=1$ for all $i$, then
\begin{equation}
\inf_{\tau\in\mathbb{R}_+} \rho(\tau) = \inf \max_{i} \rho_i
\end{equation}
where $\rho(\tau)$ is defined in \eqref{eq:rho_def_symmetric}
and
the second $\inf$ is taken over all stochastic routing vectors $p_0,p_1\in\mathbb{R}_+^N$ and timeout vectors~$\tau\in\mathbb{R}_+^N$.
\end{proposition}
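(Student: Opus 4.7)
The plan is to show both inequalities in the claimed equality. The easy direction is $\inf_{\tau\in\mathbb{R}_+}\rho(\tau) \ge \inf\max_i \rho_i$: for any scalar $\tau$, I specialize the general setup to the symmetric one by setting $p_{0,i}=p_{1,i}=1/N$ and $\tau_i=\tau$ for all $i$. With homogeneous servers ($\mu_i=1$) this makes $\rho_i=\rho(\tau)$ for every $i$, so $\max_i \rho_i = \rho(\tau)$, and taking the infimum over $\tau$ on the left-hand side of the induced inequality $\inf\max_i\rho_i \le \rho(\tau)$ yields the desired bound.

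For the reverse inequality, the key observation is that $\sum_i \rho_i$ has a clean closed form. Using $\sum_i p_{1,i}=1$ and the definition \eqref{eq:rho_def} with $\mu_i=1$,
\begin{align*}
\sum_{i=1}^N \rho_i
&= \lambda N \sum_i p_{0,i} \E[\eta_1\wedge \tau_i] + \lambda N \sum_j p_{0,j} \pp(\eta_1>\tau_j)\E[\eta_2\mid \eta_1>\tau_j] \sum_i p_{1,i}\\
&= \lambda N \sum_i p_{0,i}\Bigl(\E[\eta_1\wedge \tau_i] + \pp(\eta_1>\tau_i)\E[\eta_2\mid \eta_1>\tau_i]\Bigr)\\
&= N\sum_i p_{0,i}\,\rho(\tau_i).
\end{align*}
Since $p_0$ is a probability distribution, the weighted average $\sum_i p_{0,i}\rho(\tau_i)$ dominates $\min_i \rho(\tau_i)$, and trivially $\max_i \rho_i \ge \tfrac1N\sum_i \rho_i$. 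Chaining these,
\begin{equation*}
\max_i \rho_i \;\ge\; \tfrac{1}{N}\sum_i \rho_i \;=\; \sum_i p_{0,i}\rho(\tau_i) \;\ge\; \min_i \rho(\tau_i) \;\ge\; \inf_{\tau\in\mathbb{R}_+}\rho(\tau).
\end{equation*}
The leftmost quantity depends on $(p_0,p_1,\tau)$ but the rightmost does not, so taking the infimum of the left-hand side over all stochastic $p_0,p_1$ and timeout vectors $\tau\in\mathbb{R}_+^N$ gives $\inf\max_i \rho_i \ge \inf_{\tau\in\mathbb{R}_+}\rho(\tau)$.

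There is no real obstacle: the statement reduces to the identity $\sum_i \rho_i = N\sum_i p_{0,i}\rho(\tau_i)$ together with the elementary sandwich $\min \le \text{(convex combination)} \le \max$. The only thing to verify carefully is that $\sum_i p_{1,i}=1$ allows the $p_{1,i}$ to drop out of the second term of $\rho_i$, which is precisely where the homogeneity of servers and the structural assumption $p_{i,j}=p_{1,j}$ (so that rerouting probabilities do not depend on the source queue) get used.
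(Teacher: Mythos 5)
Your proof is correct and takes essentially the same route as the paper's: the paper bounds $\max_i\rho_i$ below by the $p_{0}$-weighted average $\sum_i p_{0,i}\rho(\tau_i)/\lambda$ (which is exactly your $\tfrac1N\sum_i\rho_i$ identity, just not stated as a standalone step) and then sandwiches with $\min_i\rho(\tau_i)\ge\inf_t\rho(t)$. The only cosmetic differences are that you isolate the identity $\sum_i\rho_i=N\sum_i p_{0,i}\rho(\tau_i)$ and that you spell out the easy $\le$ direction (specialization to the symmetric profile), which the paper leaves implicit.
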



In view of Proposition~\ref{prop:symmetric}, in the analysis that follows we will require the following assumption, which implies that the speculative queueing network is symmetric.

\begin{assumption}[Symmetric network]
\label{symmetric}
For all $i$, $\mu_i=1$, $p_{0,i}=p_{1,i}=\frac{1}{N}$ and $\tau_i=\tau$, for some $\tau\in\mathbb{R}_+$.
\end{assumption}

In the following, we first
investigate scenarios where timeouts help to increase the stability region with respect to standard load balancing. Then, we characterize an optimal timeout and provide numerical simulations to compare the performance of speculation with existing replication schemes.

\subsection{Speculation vs Standard Load Balancing}
\label{sec:comp_bernoulli}

Under Assumption~\ref{symmetric},
we investigate when the proposed approach yields an increased stability region with respect to 
standard load balancing.
 In the latter case, it is well known that $\lambda \E[\eta_1]<1$ is the necessary and sufficient stability condition for the Markov process that models the dynamics induced by several dispatching algorithms such as join-the-shortest-queue. 
Therefore, given Theorem~\ref{th1}, we aim at finding $\tau\in\mathbb{R}_+$ such that
\begin{equation}
\label{eq:design_tau}
\rho(\tau) < \lambda \E[\eta_1].
\end{equation}
This problem can be easily addressed at least numerically once probability distributions for the $\eta_i$'s are known. Nonetheless, in this section our aim is to find insights and general conditions ensuring that~\eqref{eq:design_tau} holds.

We have the first general condition; see the Appendix for a proof.
\begin{theorem}
\label{th3}
Let Assumption~\ref{symmetric} hold. Then,
$\rho(\tau) < \lambda \E[\eta_1]$
if and only if
\begin{equation}
\label{condition1}
\E[\eta_2\mid \eta_1>\tau]  < \E[\eta_1 - \tau\mid \eta_1> \tau].
\end{equation}
\end{theorem}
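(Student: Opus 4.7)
The plan is to prove the equivalence by an elementary algebraic rearrangement of $\lambda \E[\eta_1] - \rho(\tau)$, using only the decomposition $\eta_1 = (\eta_1 \wedge \tau) + (\eta_1-\tau)^+$ and the definition of conditional expectation.

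First, I would write
\[
\E[\eta_1] = \E[\eta_1 \wedge \tau] + \E[(\eta_1-\tau)^+]
\]
and note that $\E[(\eta_1-\tau)^+] = \pp(\eta_1 > \tau)\, \E[\eta_1-\tau \mid \eta_1 > \tau]$. Substituting this into the definition \eqref{eq:rho_def_symmetric} of $\rho(\tau)$, one obtains
\[
\lambda \E[\eta_1] - \rho(\tau)
= \lambda \pp(\eta_1 > \tau) \Bigl( \E[\eta_1 - \tau \mid \eta_1 > \tau] - \E[\eta_2 \mid \eta_1 > \tau] \Bigr).
\]

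Since the model assumptions in Section~\ref{sec:notation} guarantee $\pp(\eta_1/\mu_i \le \tau_i) > 0$ and, for the inequality to be meaningful, we also implicitly have $\pp(\eta_1 > \tau) > 0$ (otherwise both sides of \eqref{condition1} reduce to $0$ or are vacuous, and $\rho(\tau) = \lambda\E[\eta_1]$ trivially), the sign of the left-hand side coincides with the sign of the bracketed difference. Thus $\rho(\tau) < \lambda \E[\eta_1]$ if and only if $\E[\eta_2 \mid \eta_1 > \tau] < \E[\eta_1 - \tau \mid \eta_1 > \tau]$, which is precisely \eqref{condition1}.

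There is no real obstacle here: the proof is a one-line algebraic identity once the decomposition of $\eta_1$ into its truncation and excess is made. The only care needed is to handle the (degenerate) case $\pp(\eta_1 > \tau) = 0$ so that conditional expectations are well-defined, and to observe that under Assumption~\ref{symmetric} the formula \eqref{eq:rho_def} collapses to \eqref{eq:rho_def_symmetric}, so that the interpretation of the condition as a comparison between the expected residual size of a straggling first execution and the expected size of its second execution is transparent.
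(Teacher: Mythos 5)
Your proof is correct and is essentially the same argument as the paper's: both reduce to the identity $\lambda\E[\eta_1]-\rho(\tau)=\lambda\,\pp(\eta_1>\tau)\bigl(\E[\eta_1-\tau\mid\eta_1>\tau]-\E[\eta_2\mid\eta_1>\tau]\bigr)$, with the paper expanding $\E[\eta_1\wedge\tau]-\E[\eta_1]$ directly while you write $\eta_1=(\eta_1\wedge\tau)+(\eta_1-\tau)^+$; these are the same rearrangement. Your extra remark on the degenerate case $\pp(\eta_1>\tau)=0$ is a minor, harmless addition not in the paper.
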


The inequality \eqref{condition1} admits a simple interpretation because the RHS term is the expected remaining service time of the job in progress after age~$\tau$ and the LHS term is the expected service time of a second execution if the job would timeout at~$\tau$.

%

We now derive a further condition introducing some structure 
on the service time random variables~$\eta_i$, $i=1,2$.
In the following, $S$, $S_1$, $S_2$ and $X$ are four auxiliary nonnegative random variables each independent of all else and such that~$S$, $S_1$ and~$S_2$ are equal in distribution.

\begin{assumption}[The ``$S\&X$'' model]
\label{as_SX}
For $i=1,2$, $\eta_i=S_i X$.
\end{assumption}
The service time structure in Assumption~\ref{as_SX} was first introduced in~\cite{bettermodelTON}:~$X$ is interpreted as a job's \emph{intrinsic size} and $S_i$ represents the \emph{server slowdown} incurred by a job's $i$-th execution.
An important special case is obtained when~$X$ is deterministic, which models systems where the service time variability is only due to server runtime phenomena. Here, the service times associated to the execution of a job are independent; a line of papers investigate this case, e.g., \cite{gardner2017redundancy,Raaijmakers2019}.
\begin{remark}
\label{rem1}
Assumption~\ref{as_SX} models the case where jobs are re-executed from scratch after re-routing and
rules out the possibility that they may be resumed.
This is a worst case in our respect because speculation (and our model) allows jobs to be resumed after re-routing, and the stability region obtained by resuming jobs is clearly increased. 
\end{remark}

We will also make use of the following assumption.
\begin{assumption}
\label{as_z}
For some $z\in\mathbb{R}_+$,
\begin{align}
\label{as1}
\E\left[S x \wedge z \right] <\, \pp\left(S x\le z\right)\, \E[S]\,x, \quad\forall x\in{\rm support}(X).
\end{align} 
\end{assumption}

The following result, proven in the Appendix, provides a sufficient condition for which speculation yields an increased stability region with respect to standard load balancing.

\begin{theorem}
\label{th_cond2}
Let Assumptions~\ref{symmetric} and~\ref{as_SX} hold. Then, $\rho(z) < \lambda \E[\eta_1]$ for all $z\in\mathbb{R}_+$ that satisfy Assumption~\ref{as_z}.
Furthermore, if $X$ is deterministic, then Assumption~\ref{as_z} is also necessary to have $\rho(z) < \lambda \E[\eta_1]$.
\end{theorem}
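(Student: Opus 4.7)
The plan is to reduce Theorem~\ref{th_cond2} to the criterion already given by Theorem~\ref{th3} and then exploit the product structure $\eta_i = S_i X$ by conditioning on $X$.

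\smallskip

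\textbf{Step 1: Apply Theorem~\ref{th3}.} By Theorem~\ref{th3}, the inequality $\rho(z) < \lambda \E[\eta_1]$ is equivalent to
\[
\E[\eta_2 \mid \eta_1 > z] < \E[\eta_1 - z \mid \eta_1 > z],
\]
i.e.\ (after multiplying by $\pp(\eta_1 > z)$)
\[
\E[\eta_2 \,\mathbb{I}_{\eta_1 > z}] < \E[(\eta_1 - z)^+].
\]
Under Assumption~\ref{as_SX} the left side equals $\E[S_2 X \,\mathbb{I}_{S_1 X > z}]$; since $S_2$ is independent of $(S_1,X)$ and has the same law as $S$, this is $\E[S]\cdot \E[X\,\mathbb{I}_{S_1 X > z}]$. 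Conditioning on $X$, both sides become integrals against $F_X$:
\[
\E[(\eta_1 - z)^+] = \int \E[(Sx - z)^+]\, F_X(dx),
\qquad
\E[S]\,\E[X\,\mathbb{I}_{S_1 X > z}] = \int x\,\E[S]\,\pp(Sx > z)\, F_X(dx).
\]

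\smallskip

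\textbf{Step 2: Rewrite as a pointwise condition on $x$.} Using $\E[(Sx-z)^+] = x\E[S] - \E[Sx \wedge z]$ and $\pp(Sx>z) = 1 - \pp(Sx \le z)$, cancelling the common term $x\E[S]$ on both sides reduces $\rho(z) < \lambda \E[\eta_1]$ to
\[
\int \E[Sx \wedge z]\, F_X(dx) \;<\; \int \pp(Sx \le z)\,\E[S]\,x\, F_X(dx).
\]
Assumption~\ref{as_z} is precisely the statement that the integrand inequality
\[
\E[Sx \wedge z] < \pp(Sx \le z)\,\E[S]\,x
\]
holds for every $x \in \mathrm{support}(X)$. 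Integrating a strict, $F_X$-almost-sure inequality of integrable functions yields the strict integral inequality, which proves the first claim: Assumption~\ref{as_z} implies $\rho(z) < \lambda \E[\eta_1]$.

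\smallskip

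\textbf{Step 3: Necessity when $X$ is deterministic.} If $X \equiv x_0$, then $F_X$ is a Dirac mass at $x_0$ and the two integrals in Step~2 collapse to their integrand evaluated at $x_0$. Hence $\rho(z) < \lambda \E[\eta_1]$ is equivalent to $\E[Sx_0 \wedge z] < \pp(Sx_0 \le z)\,\E[S]\,x_0$, which is exactly Assumption~\ref{as_z} for the single point $x_0$ comprising $\mathrm{support}(X)$. This proves necessity.

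\smallskip

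There is no real obstacle; the only care needed is (i) to invoke independence of $S_2$ from $(S_1,X)$ correctly when extracting the factor $\E[S]$, and (ii) to preserve strict inequality when passing from pointwise to integral form, which is guaranteed because Assumption~\ref{as_z} is stated strictly for every $x$ in the support of $X$.
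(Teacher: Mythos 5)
Your proof is correct and arrives at the same pointwise-in-$x$ criterion as the paper, via essentially the same computation (condition on $X$, pull out $\E[S]$ by independence, and reduce to $\E[Sx\wedge z] < \pp(Sx\le z)\,\E[S]\,x$). The only cosmetic difference is that you route through the equivalence of Theorem~\ref{th3} and then specialize, whereas the paper works directly from the definition of $\rho(\tau)/\lambda$ and evaluates $\E[\eta_1]-\rho(\tau)/\lambda$ by the same conditioning; both paths are one line apart, and your handling of strictness (a.s.\ strict integrand inequality gives strict integral inequality, with the deterministic case collapsing to the single-point condition for necessity) matches what the paper leaves as ``straightforward.''
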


Theorem \ref{th_cond2} shows that Assumption \ref{as_z} is almost the requirement for speculation to increase the size of the stability region and we now briefly discuss it. 
Towards this purpose, let~$\mathcal{T}_{S,X}$ be the set of~$z\in\mathbb{R}_+$ such that \eqref{as1} holds. 
Assuming that $X=1$, it is not difficult to show that:
\begin{enumerate}
 \item If $S$ is $s_m$ with probability $p$ and $s_M>s_m$ with probability $1-p$ (a bimodal  distribution), then $\mathcal{T}_{S,X}=(s_m,(s_M-s_m)p)$; typically, $s_M\gg s_m$ and $p\approx 0.99$ \cite[Section~2.6.4]{Barroso2009}, and therefore $\mathcal{T}_{S,X}$ is not empty in practice.

 \item If $S$ follows a nondegenerate hyperexponential distribution, then $\mathcal{T}_{S,X}=(0,\infty)$;

 \item If $S \sim$ Pareto$(\alpha,s_m)$, where $\alpha>1$ and 
$s_m$ respectively denote the shape and scale parameters, 
then $\mathcal{T}_{S,X}=(\alpha s_m,\infty)$.
\end{enumerate}
On the other hand, if~$X=1$ and~$S$ follows the ``more deterministic'' Erlang distribution, then one can show that $\mathcal{T}_{S,X}$ is empty and in this case Theorem~\ref{th_cond2} implies that speculation loses capacity.

Now, let 
\begin{equation}
\label{eq:LR}
L(\tau) \bydef  \frac{\rho(\tau)}{\lambda \E[\eta_1]}.
\end{equation}
If $L(\tau)<1$, then $L(\tau)$ can be interpreted as the \emph{load reduction} with respect to standard load balancing when adopting timeout $\tau$.
As in Section 2.6.4 of \cite{Barroso2009}, let us assume that
\begin{equation}
\label{S_bimodal}
S=\left\{
\begin{array}{ll}
10 & \mbox{ w.p. } 0.99\\
10^3 & \mbox{ w.p. } 0.01,
\end{array}
\right.
\end{equation}
i.e., $S$ follows a bimodal distribution.
By increasing the timeout $\tau$, Figure~\ref{fig:plot_ell} plots $L(\tau)$
under a number of combinations on the distributions of~$S$ and~$X$.
For the Pareto distribution, we have chosen $s_m=1$ and $\alpha\in\{1.1,1.5\}$; it is known that $\alpha\in[1,1.5]$ is the most common range~\cite{Ananthanarayanan13,USENIX12,HarcholBalter2009}. 
For the hyperexponential distribution, we have chosen two phases with balanced means, as in \cite{HarcholBalter2009}. Specifically, the probability and rate vectors are $[0.99, 0.01]$ and $[1,1/99]$, respectively.
\begin{figure*}
\hspace*{-0.8cm}
\centering
\includegraphics[width=18cm]{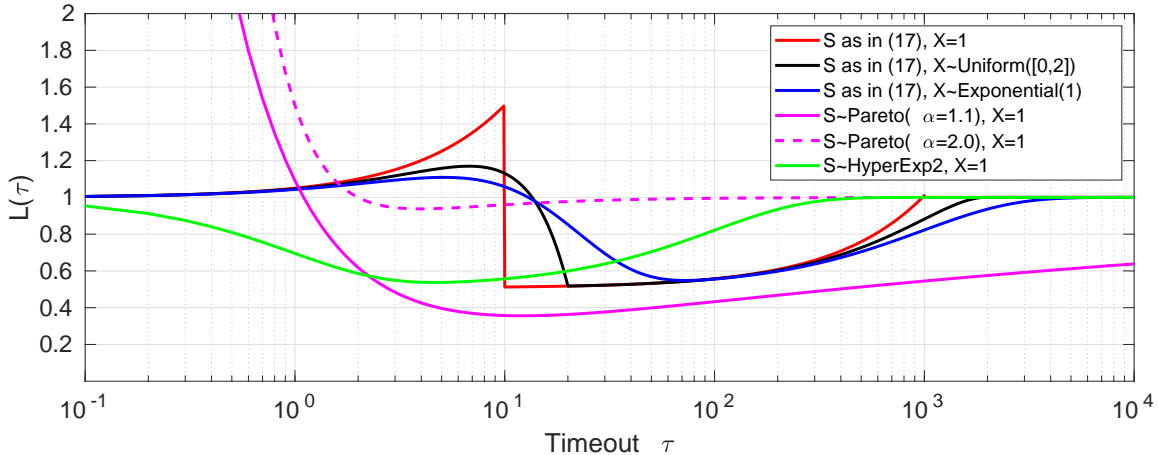}
\caption{Speculative vs standard load balancing via $L(\tau)$, \eqref{eq:LR}, under a number of $S\&X$ models.}
\label{fig:plot_ell}
\end{figure*}
%
%
If timeouts are too small, Figure~\ref{fig:plot_ell} shows that speculation may loose capacity ($L(\tau)>1$), and this is to be expected because jobs have almost no chance of completing before the timeout.
If timeouts are large enough, however, they increase the size of the stability region ($L(\tau)<1$).
We notice that the timeouts that minimize the load can double the size of the stability region achieved by standard load balancing because within such timeouts $L \approx 0.5$.
When $S$ has the bimodal form given in \eqref{S_bimodal},
this benefit is obtained whenever 
$\tau\in[10,10^3)$ (if $X=1$),
$\tau\in[13,2\times 10^3)$ (if $X\sim$ Uniform([0,2])), and
$\tau\in[13.5,\infty)$ (if $X\sim$ Exponential(1)).
When $S$ is Pareto, we observe that $L(\tau)$ decreases when $\alpha$ decreases for any given $\tau$, i.e., timeouts are particularly helpful when the service time variability increases.
When $S$ is hyperexponential, we also observe that all timeouts reduces the load. This is in agreement with the statement in point 2) above.

Since the set of $\tau$'s ensuring that $L(\tau)<1$ is somewhat large,
we conclude that 
replication by speculation is robust with respect to small errors in service time distributions and thus that 
the system manager has some flexibility when designing a proper timeout rule.

\subsection{An Optimal Timeout}
Our stability result, Theorem \ref{th1}, demonstrates that the load, $\rho(\tau)$, is the key first order performance indicator for a speculative network. 
Theorems \ref{th3} and \ref{th_cond2} show that timeouts can reduce load and thus increase capacity.  
Now, we investigate the minimizers of the load function $\tau\mapsto \rho(\tau)$, i.e., we characterize  the optimal timeout.
We address this problem under the following assumption.
\begin{assumption}\label{OTas}
We assume that $\eta_1$ has a probability density function $f_1(t)$ with a decreasing hazard function, i.e.,
\begin{align}\label{Hazard}
  t\mapsto \frac{f_1(t)}{\int_t^\infty f_1(s) ds }
\end{align}
is decreasing. Further, we assume that  
$\bareta_2(t):=  \mathbb E [ \eta_2 | \eta_1 > t] $ is such that
\begin{align}\label{mu2cond}
  t \mapsto \frac{1+ \bareta_2'(t)}{\bareta_2(t)}
\end{align}
is non-decreasing. 
\end{assumption}
The function~\eqref{Hazard} is the hazard function of a probability distribution and it is well-known that a decreasing hazard function is a characteristic of a heavy-tailed  probability distribution. 
A trivial condition for \eqref{mu2cond} to be non-decreasing is that $\eta_2$ is independent of $\eta_1$. 
This corresponds to the case where job sizes are fixed but variability in service occurs due to independent random events at each server. 
In this case, we note that \eqref{Hazard} states that the complementary CDF of $\eta_1$ is log-convex. This complements with the findings of \cite{joshi2017efficient} where it is found that independent log-convex distributions service time distributions benefit from cancel-on-complete replication. 

Let us define the optimal timeout as follows.

\begin{definition}[Optimal Timeout]\label{OptDef}
The optimal time $\tau$ is the smallest time such that
\begin{align}\label{stop}
  \frac{f_1(\tau)}{\int_\tau^\infty f_1(s)ds } \leq \frac{1+ \bareta_2'(\tau)}{\bareta_2(\tau)} \,.
\end{align}
\end{definition}
If we assume that $\eta_2$ is independent of $\eta_1$, then we notice that the optimal timeout rule reduces to the condition
\begin{align} \label{indyTime}
    \frac{f_1(\tau)}{\int_\tau^\infty f_1(s)ds } = \frac{1}{\mathbb E[\eta_2]}.
\end{align}

Definition~\ref{OptDef} provides a practical rule for speculation because it allows one to apply reinforcement learning or statistical estimation techniques to learn the optimal timeout when the service time distributions are not known in advance.


The following is our main result on optimal timeout design; see the Appendix for a proof.
\begin{theorem}
\label{th_optimal_timeout}
Let Assumptions~\ref{symmetric} and~\ref{OTas} hold.
Finite optimal timeouts (see Definition~\ref{OptDef}) exist and minimize the load $\rho(\tau)$ induced on a speculative queueing network. Moreover, if, in addition, the service requirements $\eta_1$ and $\eta_2$ are independent, then any value of $\tau$ satisfying~\eqref{indyTime} minimizes the load.
\end{theorem}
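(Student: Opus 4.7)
The plan is to compute $\rho'(\tau)$ directly and then use the two monotonicity conditions in Assumption~\ref{OTas} to show that $\rho$ is unimodal, with its global minimizer occurring at the first time \eqref{stop} is satisfied. Writing $\bar F_1(t) = \int_t^\infty f_1(s)\,ds$ and using $\E[\eta_1 \wedge \tau] = \int_0^\tau \bar F_1(s)\,ds$, the load \eqref{eq:rho_def_symmetric} takes the form
\[
\rho(\tau) = \lambda \int_0^\tau \bar F_1(s)\,ds + \lambda\,\bar F_1(\tau)\,\bareta_2(\tau),
\]
which I would then differentiate and factor to obtain
\[
\rho'(\tau) = \lambda\,\bar F_1(\tau)\,\bareta_2(\tau)\left[\frac{1+\bareta_2'(\tau)}{\bareta_2(\tau)} - \frac{f_1(\tau)}{\bar F_1(\tau)}\right].
\]

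The next step is the monotonicity argument. By \eqref{Hazard}, the subtrahend $f_1(\tau)/\bar F_1(\tau)$ is non-increasing in $\tau$, while by \eqref{mu2cond} the minuend $(1+\bareta_2'(\tau))/\bareta_2(\tau)$ is non-decreasing. Hence the bracketed expression is non-decreasing in $\tau$, so $\rho'$ changes sign at most once, and only from negative to positive. This immediately shows $\rho$ is unimodal and that the first $\tau$ at which the bracket becomes non-negative---precisely the $\tau$ singled out in Definition~\ref{OptDef}---is the global minimizer of the load.

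For existence of a finite minimizer, I would note that $\rho$ is continuous on $[0,\infty)$ with limit $\rho(\infty) = \lambda \E[\eta_1]$, because $\bar F_1(\tau)\bareta_2(\tau) = \E[\eta_2\,\mathbb{I}_{\{\eta_1>\tau\}}] \to 0$ by dominated convergence (using $\E[\eta_1+\eta_2] < \infty$). By the unimodality above, the only obstruction to having a finite minimizer is the bracket remaining strictly negative on all of $[0,\infty)$, in which case $\rho$ is strictly decreasing with infimum $\lambda\E[\eta_1]$ unattained. Under Assumption~\ref{OTas}, however, the decreasing hazard eventually drops below the non-decreasing bound $(1+\bareta_2')/\bareta_2$, so a finite crossing point exists and is the optimal timeout.

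Finally, when $\eta_2$ is independent of $\eta_1$, we have $\bareta_2(\tau) \equiv \E[\eta_2]$ and $\bareta_2'(\tau)\equiv 0$, so condition~\eqref{stop} collapses to $f_1(\tau)/\bar F_1(\tau) = 1/\E[\eta_2]$, which is~\eqref{indyTime}; unimodality then implies that any $\tau$ solving~\eqref{indyTime} minimizes the load. I expect the main obstacle to be a clean existence argument for a finite $\tau^*$ without additional hypotheses---one must rule out the purely decreasing case of $\rho$---but this is a boundary analysis combining the decreasing-hazard property of $\eta_1$ with the behavior of $\bareta_2$ at the origin, and is technically much less involved than the derivative computation and sign analysis that carry the bulk of the proof.
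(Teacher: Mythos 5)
Your derivative computation and sign-monotonicity argument are correct and coincide with what the paper itself presents as its ``Continuous Time Argument'' in the lead-up to the formal proof: both land on
\[
\rho'(\tau) = \lambda\,\bar F_1(\tau)\,\bareta_2(\tau)\left[\frac{1+\bareta_2'(\tau)}{\bareta_2(\tau)} - \frac{f_1(\tau)}{\bar F_1(\tau)}\right],
\]
and both invoke Assumption~\ref{OTas} to make the bracket non-decreasing, so the first sign change is the global minimizer. Where the paper's official proof diverges is that it sets up a free boundary / optimal stopping problem (following Peskir--Shiryaev) and solves the associated ODE for the value function $V_\tau(t)$. That machinery is heavier but establishes optimality within the full class of continuous-time \emph{stochastic} stopping times; your calculus argument only handles deterministic $\tau$. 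Since the theorem as worded asks to minimize $\rho(\tau)$ over $\tau\in\mathbb{R}_+$, and since $\rho$ is a function of a scalar $\tau$, your elementary route is adequate for the stated claim and arguably cleaner — the free boundary apparatus is what you would need if the timeout were allowed to depend on additional run-time information.

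On the existence of a finite $\tau^*$: you correctly identify this as a soft spot, but your heuristic (``the decreasing hazard eventually drops below the non-decreasing bound'') does not follow from Assumption~\ref{OTas} alone. For instance, a decreasing hazard may be bounded below by a positive constant while $(1+\bareta_2')/\bareta_2$ need not rise to meet it — e.g., when $\eta_1$ and $\eta_2$ are strongly positively dependent so $\bareta_2(t)\to\infty$, making $(1+\bareta_2'(t))/\bareta_2(t)$ small. In that regime the stopping set is empty and the infimum $\lambda\E[\eta_1]$ is attained only at $\tau=\infty$. This is consistent with the paper's own remark that identical replicas render speculation useless. The paper's formal proof also does not close this gap — it implicitly takes the finiteness of the crossing point as given — so it is a shared omission rather than a flaw introduced by your approach; just be aware that, as written, neither argument proves finiteness from the stated assumptions alone.
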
 

\subsection{Speculation vs Replication}
\label{rep_vs_spec}

We now compare the performance achieved by Speculative Load Balancing (SLB) with the performance achieved by strategies that replicate jobs at the time of their arrival in the network.
We perform such comparison within the $S\&X$ model in Assumption~\ref{as_SX}, that is a worst case scenario for speculation in view of Remark~\ref{rem1}.

First, let us consider a setting where each incoming job is replicated to~$d$ queues selected uniformly at random and independently of anything else. Here, redundant replicas are canceled as soon as one either completes (Cancel-on-Complete) or starts (Cancel-on-Start) service.
We refer to the former resp. latter scheme as CoC-$d$ resp. CoS-$d$.


It is known that CoS-$d$ is equivalent to the Least-Left-Workload-$d$ (LLW-$d$) dispatching algorithm. We recall that LLW-$d$ sends each incoming job to a queue having the shortest remaining workload among~$d$ selected uniformly at random, with ties broken randomly.
Since stability for CoS-$d$ is obtained if and only if $\lambda \E[\eta_1]<1$, i.e., as in standard load balancing, we have the following remark.
\begin{remark}
SLB yields an increased stability region w.r.t. CoS-$d$ whenever the conditions in Theorems~\ref{th3} and~\ref{th_cond2} are satisfied. 
\end{remark}

On the other hand, it is difficult to perform a comparison with CoC-$d$ because a satisfactory characterization of the resulting stability region is currently unclear
\cite{Borst2020,MENDELSON2021113,Raaijmakers2019}.
However, depending on the load regime, we can argue intuitively as follows:
\begin{itemize}
\item Assume the network is \emph{lightly loaded}, or over-provisioned. Then, the time spent in the system by a job with CoC-$d$ is $\min(\eta_1,\ldots,\eta_d)$, where $\eta_c$ represents the service time of copy~$c=1,\ldots,d$. This should be compared to $\eta_1\wedge \tau + \eta_2\,\mathbb{I}_{\{\eta_1>\tau\}}$, the time spent by a job in a speculative queueing network within the same load condition. For any $d\ge 2$, it is not difficult to show that\footnote{We recall that $\eta_1$ and $\eta_2$ are assumed equal in distribution in this section.} 
$\min(\eta_1,\ldots,\eta_d) \le \eta_1\wedge \tau + \eta_2\,\mathbb{I}_{\{\eta_1>\tau\}}$, which means that, in a lightly loaded regime, it is better to replicate upon job arrivals rather than speculate upon straggler detection. This claim should also be intuitive.

\item Assume the network is \emph{moderately loaded}.
Then, within CoC-$d$ it may happen that the first copy that completes is not the minimum and that the fast copy gets stuck in its queue.
For instance, for a given job $n$, let us assume that $\eta_2(n)\ge \eta_1(n)$ and that the copy that completes first when applying CoC-$d$ is the one of size $\eta_2(n)$. On the event $\eta_1(n)\le \tau$, speculation on that job induces a lower load and no price is paid for the extra copies.

\item Assume the network is \emph{heavily loaded}.
When $\lambda$ increases, the scenario depicted in the moderate load regime amplifies, potentially leading CoC-$d$ to be unstable.
Here, it is not intuitive which approach is better than the other.


\end{itemize}

To support the intuition above, we present the results obtained by running a large set of numerical simulations.
In our tests, we assume the service time structure given in Assumption~\ref{as_SX}.
Figure~\ref{fig:plot_SLB_vs_RED} plots the average response time (time spent in the system) obtained within SLB, CoC-$d$ and CoS-$d$, for $d=2,4$ by increasing $\lambda\E[\eta_1]$ while keeping $\E[\eta_1]$ constant and under a number of $S\&X$ models where $\E[X]=1$ and the random variable $S$ is given by \eqref{S_bimodal} (bimodal distribution) or follows a Pareto distribution with shape parameter $\alpha\in\{1.1,1.5,2\}$ over the support~$[1,10^3]$.
For SLB, we have chosen the timeouts $\tau^*$ specified in the respective subfigures.
Each point ($\ast$) in each curve refers to an average of 50 simulations and each simulation executes $10^7$ jobs.
We also assume $N=50$ FCFS queues.

\begin{figure*}[t]
\hspace*{-2.0cm}
 \centering
\includegraphics[width=20cm]{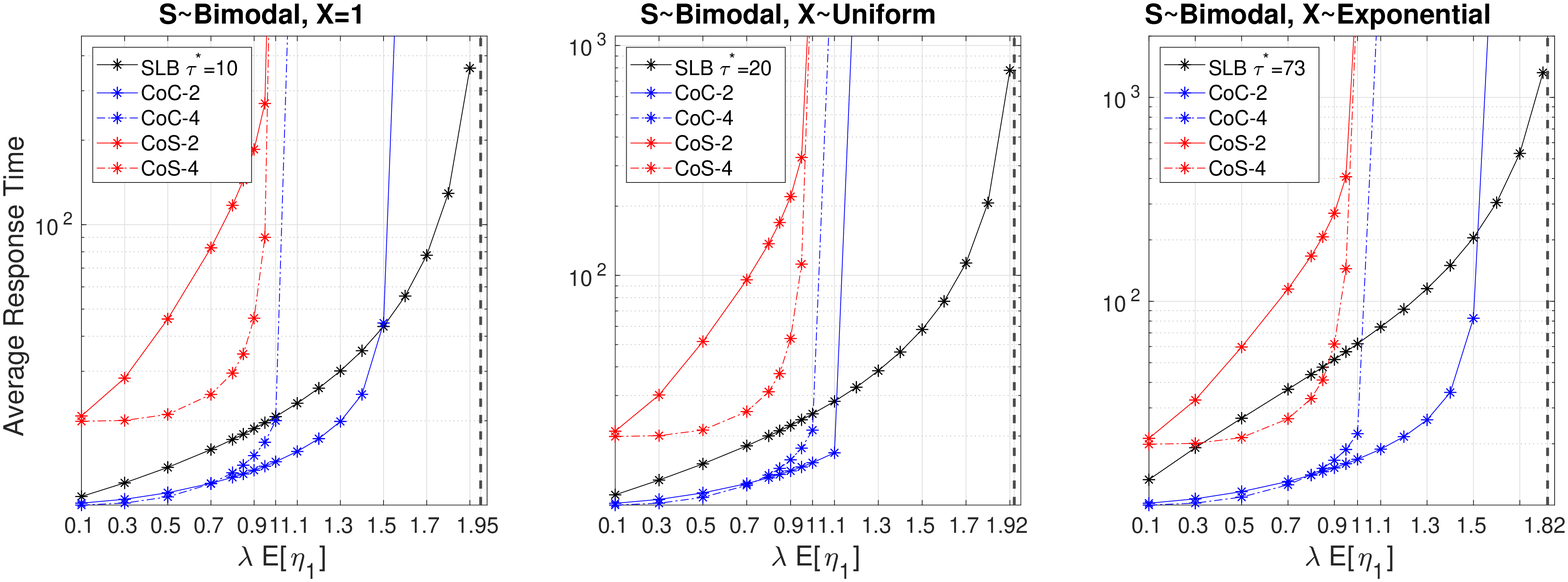}
\\
\hspace*{-2.0cm}
 \centering
 \includegraphics[width=20cm]{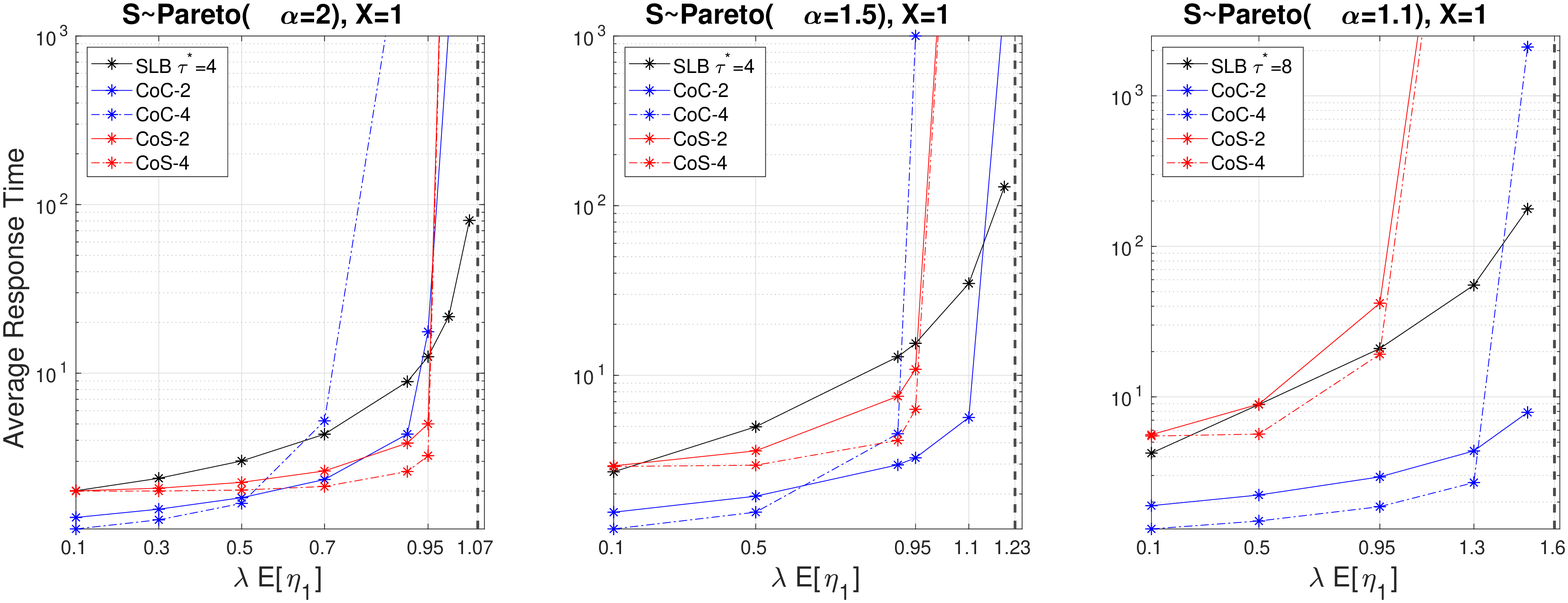}
\caption{Average response time obtained within Speculative Load Balancing (SLB), Cancel-on-Complete-$d$ (CoC-$d$) and Cancel-on-Start-$d$ (CoS-$d$) under $S\&X$ models.
The vertical dashed black lines represent the limits of the stability region of SLB.
}
\label{fig:plot_SLB_vs_RED}
\end{figure*}

The plots in Figure~\ref{fig:plot_SLB_vs_RED} confirm the intuition above.
CoC-$d$ provides the best results in light load conditions (as expected) and it also increases the stability region achieved by CoS-$d$, that is the stability region of standard join-the-shortest-queue-like algorithms ($\lambda\E[\eta_1]<1$).
However, in most cases, SLB goes further and is able to accommodate more traffic than CoC-$d$.
For the most extreme heavy tailed distribution, Pareto with $\alpha=1.1$, CoC-$2$ eventually provides the best results, though within CoC-$d$ is not clear how to well choose~$d$ \emph{a priori}. 

We now compare SLB with Redundant-to-Idle-Queue-$d$ (RIQ-$d$),
a replication scheme that works as CoC-$d$ but replicas are only made to those servers which are idle, and if no idle server is found then the job is sent to a random one of the $d$ selected; see~\cite{bettermodelTON}.
RIQ-$d$ was introduced to avoid the potential loss of capacity of CoC-$d$, though at the same time we observe that it can ban its potential gain.
We also notice that comparing SLB and RIQ-$d$ is not completely fair because the latter scheme is \emph{dynamic} in the sense that it needs dispatcher(s) and queues to continuously exchange information about servers' status. 
No feedback mechanisms to dispatcher(s) are assumed in SLB, though they could clearly be integrated to further reduce delays; we do not discuss this in this paper.
Within the same setting described above,
Figure~\ref{fig:plot_SLB_vs_RED} plots the average response time obtained within SLB and RIQ-$d$, for $d=2,4$ by increasing $\lambda\E[\eta_1]$ while keeping $\E[\eta_1]$ constant. 
\begin{figure*}[t]
\hspace*{-2.0cm}
 \centering
\includegraphics[width=20cm]{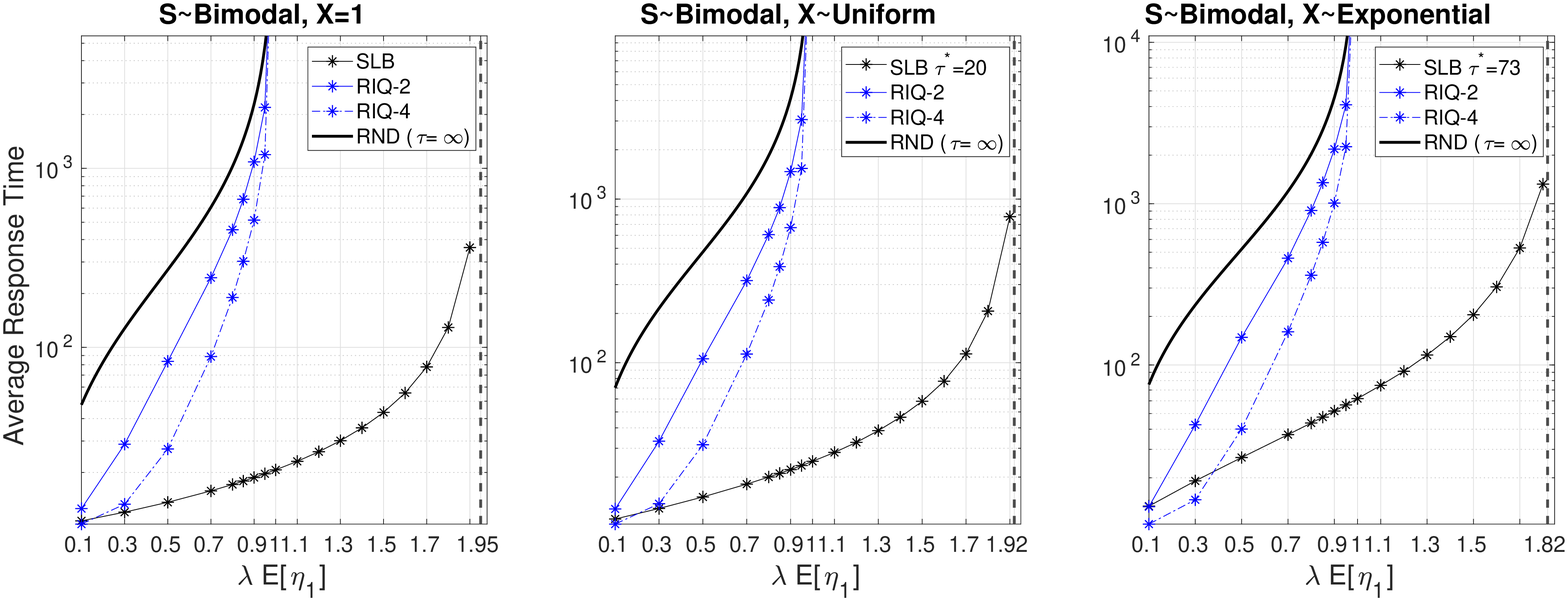}
\caption{Average response time obtained within Speculative Load Balancing (SLB) and Redundant-to-Idle-Queue-$d$ (RIQ-$d$) under $S\&X$ models.
The vertical dashed black lines represent the limits of the stability region of SLB.
}
\label{fig:plot_SLB_vs_RIQ}
\end{figure*}
RIQ-$4$ provides slightly better results in light load conditions, which is to be expected because as $\lambda\downarrow 0$ it behaves as CoC-$4$. As the load increases, however,
it is less and less likely to find idle queues and the dynamics of RIQ-$d$ get closer and closer to the dynamics of Random (RND), which sends each job to a single server selected independently at random.
For the latter, the stability condition $\lambda \E[\eta_1]<1$ applies, while SLB preserves stability much~further.

Finally, we conclude this section with the following remark about the communication overhead induced by SLB, CoC-$d$, CoS-$d$ and RIQ-$d$. SLB requires $1+\pp(\eta_1> \tau)<2$ control messages per job in average. On the other hand, CoC-$d$ and CoS-$d$ require, for each job, $d$ dispatch messages plus $d-1$ cancellation messages.

\section{Response Time for Large Systems}\label{sec:Large_Response}

In this section, we consider a symmetric (see Assumption~\ref{symmetric}) speculative queueing network composed of FCFS queues. 
Within these assumptions, let $R_{N,n}(\tau)$ be the overall time spent by the $n$-arriving job in the system and let
\begin{align}
R_N(\tau) \bydef \limsup_{m\to\infty} \frac{1}{m}\sum_{n=1}^m \E[R_{N,n}]
\end{align}
be the average \emph{response time} experienced by jobs.
In this section, our goal is to investigate~$R_N(\tau)$.

First, we observe that $R_N(\infty)$ corresponds to the mean response time of an M/GI/1 queue.
%
When~$\tau<\infty$, the feedback speculation mechanism significantly complicates the analysis and our aim is to develop an approximation.
We focus on the large system limiting regime where $N\to\infty$ and $\rho(\tau)$ is constant.
Defining
\begin{align}
W&\bydef \frac{\lambda}{2} (1+\pp(\eta_1\ge \tau)) \frac{M}{1-\rho(\tau)}\\
%
M&\bydef  \frac{\E[(\eta_1\wedge \tau)^2] + \E[\hat{\eta}_2^2]\, \pp(\eta_1>\tau)}{1+\pp(\eta_1>\tau)}
\end{align}
where $\hat\eta_2$ is an auxiliary random variable equal in distribution to $\eta_2\mid\eta_1>\tau$, we claim that the following conjecture holds true.

\begin{conjecture}
\label{conjecture}
Provided that $\rho(\tau)<1$,
\begin{align}
\label{lim_responsetime}
\lim_{N\to\infty} R_N(\tau)
%
%
& = (1 + \pp(\eta_1> \tau)) W  + \frac{\rho(\tau)}{\lambda}.
\end{align}
\end{conjecture}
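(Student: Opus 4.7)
The plan is to prove the conjecture by a mean-field / propagation-of-chaos argument: show that as $N\to\infty$ each queue decouples into an independent M/GI/1 FCFS queue whose primitives can be read off from the arrival and feedback structure of a single tagged server, then apply the Pollaczek-Khinchine formula together with a type-based decomposition of a tagged job's sojourn.

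First I would identify the limiting primitives at a single queue. Under Assumption~\ref{symmetric}, each queue receives exogenous Poisson arrivals of rate $\lambda$; by Poisson thinning on $\{\eta_1 \le \tau\}$ and $\{\eta_1 > \tau\}$ these split into a class-$(c,i)$ stream of rate $\lambda \pp(\eta_1\le \tau)$ with service distribution $\eta_1\mid\eta_1\le\tau$ and a class-$(u,i)$ stream of rate $\lambda\pp(\eta_1>\tau)$ with deterministic service $\tau$. The re-routed traffic into a given queue is the superposition, over the $N-1$ other servers, of low-intensity thinnings of their "$u$"-streams; in the large-$N$ limit this converges to an independent Poisson process of rate $\lambda\pp(\eta_1>\tau)$ whose service times are distributed as $\hat\eta_2$. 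Thus each queue sees Poisson arrivals of total rate $\Lambda \bydef \lambda(1+\pp(\eta_1>\tau))$ with a mixed service distribution whose mean equals $\rho(\tau)/\Lambda$ and whose second moment equals precisely $M$, so that the asymptotic utilization is $\rho(\tau)$.

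Next I would justify the asymptotic independence of queues. Since each timed-out job picks its destination uniformly among $N$ servers, the probability that any prescribed pair of queues exchanges feedback within a fixed horizon is $O(1/N)$. A coupling with an idealized system in which re-routed arrivals at each queue are replaced by an independent Poisson stream of the correct intensity and service law shows that the resulting coupling error vanishes as $N\to\infty$, in the spirit of the standard symmetric mean-field analyses of queueing networks. With this in hand, Pollaczek-Khinchine yields the stationary mean waiting time at a single queue as $\Lambda M /(2(1-\rho(\tau))) = W$. A tagged job's sojourn then decomposes as follows: with probability $\pp(\eta_1\le\tau)$ it experiences one waiting time (mean $W$) and a service time $\eta_1\mid\eta_1\le\tau$; with probability $\pp(\eta_1>\tau)$ it experiences a waiting time at its first queue, service for $\tau$, then an asymptotically-independent waiting time at its second queue (mean $W$ again by the decoupling), and finally a service time $\hat\eta_2$. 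Taking expectations,
\begin{align*}
R_\infty(\tau) &= (1-\pp(\eta_1>\tau))\bigl(W+\E[\eta_1\mid\eta_1\le\tau]\bigr) + \pp(\eta_1>\tau)\bigl(2W+\tau+\E[\hat\eta_2]\bigr)\\
&= (1+\pp(\eta_1>\tau))\,W + \E[\eta_1\wedge\tau] + \pp(\eta_1>\tau)\E[\hat\eta_2]\\
&= (1+\pp(\eta_1>\tau))\,W + \rho(\tau)/\lambda,
\end{align*}
which is the claimed limit.

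The hardest step is the propagation-of-chaos argument. FCFS with a mixture of service laws and a feedback loop falls outside the classical exponential mean-field framework, so a rigorous proof likely requires either a martingale problem for the measure-valued empirical process tracking residual workload at each server, combined with tightness in Skorokhod space and uniqueness of a limiting McKean-Vlasov equation, or a direct pairwise coupling with the idealized independent-Poisson system. A secondary technical point is interchanging the Cesàro limit defining $R_N(\tau)$ with the $N\to\infty$ limit, which would follow from positive Harris recurrence (Theorem~\ref{th1}) together with uniform integrability of the per-job sojourn times—establishing the latter uniformly in $N$ is the other delicate ingredient.
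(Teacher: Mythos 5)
Note first that the statement you were asked to prove is labeled a \emph{conjecture} in the paper, and the paper itself offers only a heuristic justification, not a proof. Your argument follows that justification in essentially the same way: postulate asymptotic independence of queues; approximate the superposition of the $N$ low-intensity feedback streams into a tagged queue by a Poisson process of rate $\lambda\,\pp(\eta_1>\tau)$ (the paper invokes the Palm--Khinchine theorem here, you a pairwise $O(1/N)$-interaction coupling---same idea); read off the effective M/G/1 primitives so that the total arrival rate is $\lambda(1+\pp(\eta_1>\tau))$ and the mixed service law has $\E[H^2]=M$ and utilization $\rho(\tau)$; apply Pollaczek--Khinchine to obtain $W$; and decompose the tagged job's sojourn by conditioning on $\{\eta_1\le\tau\}$ versus $\{\eta_1>\tau\}$, with a second independent waiting time $W$ at the second queue in the latter case. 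Your final algebra reproduces \eqref{lim_responsetime} exactly. You are also right that the propagation-of-chaos step (asymptotic independence of FCFS queues with this two-type feedback structure) and the interchange of the Ces\`aro limit defining $R_N(\tau)$ with $N\to\infty$ (via uniform integrability of sojourn times, uniformly in $N$) are precisely the unproved ingredients---this is exactly why the paper leaves the result as a conjecture rather than a theorem. So the proposal is consistent with, and essentially identical to, the paper's own (non-rigorous) derivation.
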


The underlying justification behind Conjecture~\ref{conjecture} lies in postulating that queues become ``asymptotically independent'' in the limit $N\to\infty$.  
In this case, the arrival process at each queue~$i$ is the superposition of an exogenous rate-$\lambda$ Poisson process and~$N$ independent feedback processes each with rate $\lambda\, \pp(\eta_1>\tau)/N$.
Given that the intensity of each of the feeback processes approaches zero as $N\to\infty$, we may assume that their superposition is ``approximately'' a Poisson process with rate $\lambda\, \pp(\eta_1>\tau)$, provided that $N$ is ``large''.
This is justified by the Palm--Khinchine theorem, which ensures that the superposition of several independent  sparse point processes converges weakly to a Poisson process on~$\mathbb{R}_+$; see \cite[Chapter~5.8]{sobel}.
%
Therefore, we conjecture that the arrival process at each queue~$i$ is the independent superposition of 
1) a rate-$\lambda$ Poisson process carrying jobs of sizes equal in distribution to $\eta_1$ and
2) a rate-$\lambda\, \pp(\eta_1>\tau)$ Poisson process carrying jobs of sizes equal in distribution to $\eta_2|\eta_1\ge \tau$.
This queue can be interpreted as an M/G/1 queue where the arrival rate is $\lambda + \lambda\, \pp(\eta_1>\tau)$ and the service times are equal
in distribution to $H$, where 
\begin{equation}
\label{H}
H=\left\{
\begin{array}{ll}
\overline{\eta}_1\wedge\tau & \mbox{ w.p. } \frac{1}{1 + \pp(\eta_1>\tau)}\\
\eta_2\mid\eta_1\ge \tau & \mbox{ otherwise}
\end{array}
\right.
\end{equation}
with $\overline{\eta}_1$ equal in distribution to $\eta_1$ but independent of all else.
Note that $\E[H^2]=M$. Furthermore, the traffic intensity at this M/G/1 queue is
$( \lambda +\lambda \pp(\eta_1>\tau)) \E[H]=\rho(\tau)$
and the mean workload observed at arrival times is~$W$, which follows by applying the Pollaczek-Khinchine formula.
Conditioning on $\eta_1\le\tau$, the mean response time of a job~is   
\begin{align}
\label{W_cond1}
W + \E[\eta_1\mid \eta_1\le \tau],
\end{align}
and conditioning on $\eta_1>\tau$, the mean response time of a job is   
\begin{align}
\label{W_cond2}
W + \tau + W + \E[\eta_1\mid \eta_1\le \tau],
\end{align}
that is the time spent during the first visit, $W + \tau$, plus 
the time spent during the second visit, $W + \E[\eta_1\mid \eta_1\le \tau]$.
Putting~\eqref{W_cond1} and~\eqref{W_cond2} together, we get
\begin{equation*}
\pp(\eta_1\le \tau)  (W + \E[\eta_1\mid \eta_1\le \tau]) 
+  \pp(\eta_1> \tau) ( 2W + \tau + \E[\eta_2|\eta_1\ge \tau] ),
\end{equation*}
which after some algebra boils down to~\eqref{lim_responsetime}.

Figure~\ref{fig:conjecture} plots the average time spent by a job in a speculative queueing network
obtained by simulation ($N=50$) and by the conjectured Formula~\eqref{lim_responsetime}.
\begin{figure}
\centering
\includegraphics[width=13cm]{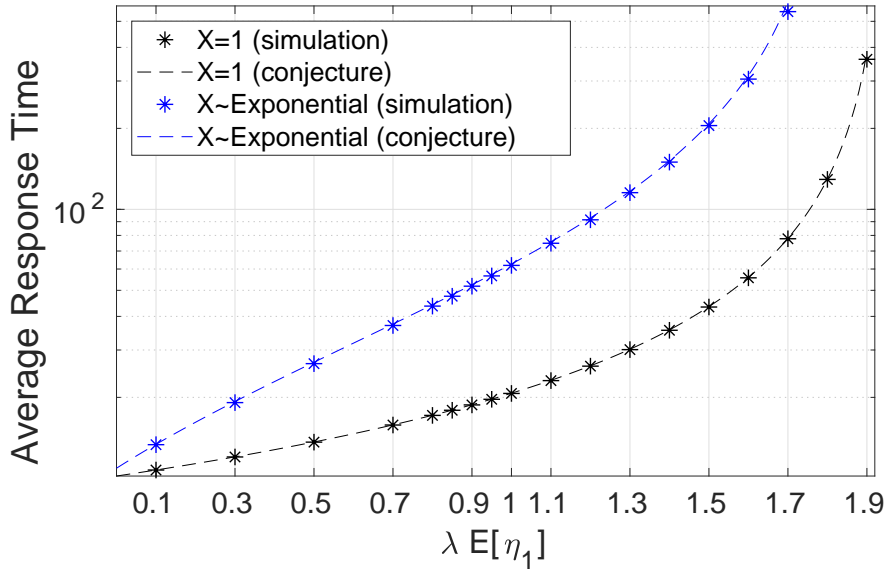}
\caption{Average response times obtained by simulation and via Conjecture~\ref{conjecture}; $S$ has the form \eqref{S_bimodal} and $N=50$.}
\label{fig:conjecture}
\end{figure}
Each point ($\ast$) of each plot refers to an average of 50 simulations and each simulation was based on $10^7$ jobs. 
As in the plots of Section~\ref{rep_vs_spec}, we assume $\eta_1=S_1 X$ and $\eta_2=S_2X$ where $S$ is as in~\eqref{S_bimodal}. For $X$, we distinguish the cases $X=1$ and $X\sim$ Exponential(1). It is remarkable how the approximation is accurate under any load condition.

\section{Conclusions}
\label{sec:conclusions}

In this paper, we provide a first performance evaluation for job speculation in a queueing network. 
We characterize the stability region of a speculative queueing network and find the resulting stability region to be distinct from a queueing network implementing a standard load balancing scheme. We provide conditions on job size distributions for speculative load balancing to increase the size of the stability region. Specifically, in the presence of heavy tailed server slowdown, we find that speculation is a good mechanism to increase stability and thus throughput. We provide the first characterization of the optimal timeout for a speculative task. This follows simple implementable formulas~\eqref{stop} and, under independence,~\eqref{indyTime}. 
Under moderate to heavy loadings, simulations indicate that speculation can significantly improve performance when compared with redundancy schemes and standard load-balancing systems. 
Finally, using the Pollaczek-Khinchine formula, we postulate on the impact of speculation on response times by providing a numerically accurate approximation.

The proposed framework opens up important questions:
\begin{itemize}
 \item 
It is possible to design schemes that combine the benefits of replication for a light loaded regime, while maintaining the desirable stability properties of speculation for moderate to high loads.
 \item 
The design of the optimal timeout requires the solution of a Markov decision process. Since task sizes must be statistically estimated, real systems may need to apply reinforcement learning to design optimal task based replication decisions.

 \item 
The analysis in Section \ref{sec:Large_Response} forms a conjecture on the mean field limit of a speculative queueing network. The resolution of this conjecture requires further analysis.

 \item 
We have applied random server assignment to arriving and speculative tasks. However, it may be preferable to implement join-idle-queue or join-the-shortest-queue-$d$ routing to these tasks. While we believe that the stability region obtained within such speculative dynamic schemes remains unchanged, delays are expected to be further reduced.

\end{itemize}

A natural generalization of our model considers multiple speculation levels. Specifically, timeouts can be used in second job executions and a job that timeouts in the second visited queue gets routed to a third queue, and so forth a number of times. We believe that our results generalize to this setting naturally, though a formal proof for Theorem~\ref{th1} would require more work.

Although job speculation is widely used in practice, its performance has been understudied relative to recent works on redundancy. The proposed framework sheds a new light on speculative load balancing and addresses a number of important questions on the design of job replication in large scale computer systems.



  \section*{Acknowledgment}

This work is supported by the French National Research Agency in the framework of the ``Investissements d'avenir'' program (ANR-15-IDEX-02) and the LabEx PERSYVAL (ANR-11-LABX-0025-01).

\bibliographystyle{abbrv}

\section*{Proof of Proposition \ref{thDAI}}

The proof below identifies how our model fits within the framework of Bramson \cite{bramson2008stability}, under the multi-class structure given in Section \ref{sec:multiclass_rep}.

\begin{proof}[Proof of Proposition \ref{thDAI}] 
The proof follows from arguments in Bramson \cite{bramson2008stability}. 
Here we confirm that conditions stated in \cite{bramson2008stability} apply to our model and we refer the reader to appropriate sections and results in that work. 

We note that, typically in the fluid analysis of queueing networks, it is assumed that the arrival, service and routing processes are independent. This is not true in our case under our first state description in Section \ref{sec:model}. However, under the extended class structure in Section \ref{sec:multiclass_rep}, the state description does have independent job sizes (within each class). 
Further it should be noted that notions of positive Harris recurrence and petite sets remain without these independence assumptions (see the remark following \cite[Proposition 2.1]{dai1995positive}). 
First we note that the process is a (Harris) Markov process. In particular, the state descriptor given in \eqref{QUV} and given by Bramson \cite[Section 4.1 \& 4.3]{bramson2008stability} remains Markov in our setting, since when we condition on this state descriptor all future events are  function of independent random variables.
Second, we assume head-of-the-line service.
These conditions ensure that closed sets are petite; see \cite[Proposition~4.8]{bramson2008stability}. 

Finally, we observe that the conditions of \cite[Theorem 4.16]{bramson2008stability} have now been met. Thus as concluded in the theorem, queueing network stability (positive Harris recurrence) holds whenever the associated fluid model is stable. 
%
\end{proof}

\section*{Proof of Theorem~\ref{th1}}

In view of Lemma~5.3 in \cite{dai1995positive}, we can restrict to fluid solutions where $\V=0$.
Therefore, we can assume that for all $i$
\begin{align}
\label{eq:FS1_red}
&\Q_{c,i}(t) = \Q_{c,i}(0)  + \lambda N p_{0,i} \pp(\eta_1\le \tau_i\mu_i) t  - \frac{ \T_{c,i}(t) }{\E[\frac{\eta_1}{\mu_i}\mid \frac{\eta_1}{\mu_i}\le \tau_i]}  \ge 0\\
&\Q_{u,i}(t) = \Q_{u,i}(0)  + \lambda N p_{0,i} \pp(\eta_1> \tau_i\mu_i) t  - \frac{\T_{u,i}(t)}{\tau_i}      \ge 0 \\
\label{eq:FS2_red}
&\Q_{j,i}(t) = \Q_{j,i}(0)  - \frac{ \T_{j,i}(t) }{\E[\frac{\eta_2}{\mu_i}\mid\frac{\eta_1}{\mu_j}>\tau_j]} + p_{1,i}  \,\frac{\T_{u,j}(t)}{\tau_j}    \ge 0, \forall (j,i)\in\mathcal{K}_2.
\end{align}
For all $i$,
let $G_i(t)$ be such that
\begin{equation}
\label{eq:Lyapunov_i_new}
%
G_i(t)\frac{p_{1,i}}{\mu_i} =
\E\left[\frac{\eta_1}{\mu_i}\mid \frac{\eta_1}{\mu_i}\le \tau_i\right]\, \Q_{c,i}(t) 
+  \tau_i  \, \Q_{u,i}(t)\\
+ \sum_{j} 
\E\left[\frac{\eta_2}{\mu_i}\mid\frac{\eta_1}{\mu_j}>\tau_j\right]  \left(p_{1,i}\Q_{u,j}(t)  +  \Q_{j,i}(t)  \right) 
\end{equation}
and note that $G_i(t)$ is absolutely continuous.
The function $G_i(t)$ may be interpreted as a Lyapunov function on queue~$i$ and we note that it does not correspond to the workload in queue~$i$.
Substituting~\eqref{eq:FS1_red}-\eqref{eq:FS2_red} in \eqref{eq:Lyapunov_i_new}, after some algebra we obtain
\begin{align*}
G_i(t)\frac{p_{1,i}}{\mu_i}
&= \,  G_i(0)\frac{p_{1,i}}{\mu_i}+
\lambda N p_{0,i} \E\left[\frac{\eta_1}{\mu_i}\wedge \tau_i\right] t  -  \T_{c,i}(t)-  \T_{u,i}(t)  \\
&\quad +
\sum_j 
\E\left[\frac{\eta_2}{\mu_i}\mid\frac{\eta_1}{\mu_j}>\tau_j\right] 
p_{1,i} 
 \lambda N p_{0,j}\pp(\eta_1> \tau_j\mu_j) t  -  \T_{j,i}(t)  \\
 &=\,  \frac{G_i(0)}{\mu_i} -  \B_{i}(t) +
\lambda N p_{0,i} \E\left[\frac{\eta_1}{\mu_i}\wedge \tau_i\right] \,t
+
\sum_j 
\E\left[\frac{\eta_2}{\mu_i}\mid\frac{\eta_1}{\mu_j}>\tau_j\right] 
p_{1,i} 
 \lambda N p_{0,j} \pp(\eta_1> \tau_j\mu_j)\,t \\
&=\,  \frac{G_i(0)}{\mu_i} -  \B_{i}(t) + \rho_i \,t 
\end{align*}
where $\B_i(t)\bydef \T_{c,i}(t)+\T_{u,i}(t) + \sum_{j\neq i}  \T_{j,i}(t)$ is interpreted as the cumulative time queue~$i$ has been busy (non-idle) in $[0,t]$. 
Now, assuming that~$t$ is a point of differentiability for~$G_i(t)$
\begin{align}
\dot{G}_i(t) \frac{p_{1,i}}{\mu_i}
& =-  \mathbb{I}_{\{\mathcal{Q}_i(t) > 0\}} + \rho_i
 \end{align}
where $\mathcal{Q}_i(t)\bydef \Q_{c,i}(t) + \Q_{u,i}(t) + \sum_{j\neq i} \Q_{j,i}(t)$ is the total number of jobs in queue~$i$ at time~$t$.
Thus, whenever $\mathcal{Q}_i(t)>0$, 
\begin{equation}
\label{lemma32a}
\dot{G}_i(t) \frac{p_{1,i}}{\mu_i} 
=  \rho_i-1  < 0.
\end{equation}
On the other hand, if $\mathcal{Q}_i(t)=0$, $\mathcal{Q}_s(t)>0$ and $s\neq i$, then
\begin{align*}
& G_i(t) - G_s(t) 
 =  \left(\sum_{j\neq i} \E\left[ \eta_2 \mid\frac{\eta_1}{\mu_j}>\tau_j\right] \Q_{u,j}(t)\right)  -   G_s(t)  \\
& \le \left(\sum_{j\neq i} \E\left[ \eta_2\mid\frac{\eta_1}{\mu_j}>\tau_j\right] \Q_{u,j}(t)\right)  
- \E\left[ \frac{\eta_1}{p_{1,s}} \mid \frac{\eta_1}{\mu_s}\le \tau_s\right]\, \Q_{c,s}(t) 
- \frac{\tau_s \mu_s}{p_{1,s}} \, \Q_{u,s}(t)
\\
&\quad - \sum_{j\neq i} \E\left[ \eta_2 \mid\frac{\eta_1}{\mu_j}>\tau_j\right] \left(  \Q_{u,j}(t)    + \frac{\Q_{j,s}(t)}{p_{1,s}}\right)\\
& \le 
- \E\left[ \frac{\eta_1}{p_{1,s}} \mid \frac{\eta_1}{\mu_s}\le \tau_s\right]\, \Q_{c,s}(t) 
- \frac{\tau_s \mu_s}{p_{1,s}} \, \Q_{u,s}(t)
- \sum_{j\neq i} \E\left[ \eta_2 \mid\frac{\eta_1}{\mu_j}>\tau_j\right]  \frac{\Q_{j,s}(t)}{p_{1,s}} < 0
\end{align*}
where the last inequality follows because
$\mathcal{Q}_s(t)>0$.
Since $s$ is generic, we obtain
\begin{equation}
\label{lemma32b}
G_i(t) <  \min_{s\neq i} G_s(t)
\end{equation}
provided that $\mathcal{Q}_i(t)=0$.

Let $G(t)\bydef \max_{i=1,\ldots,N} G_i(t)$.
With \eqref{lemma32a} holding true when $\mathcal{Q}_i(t)>0$ and \eqref{lemma32b} holding true when $\mathcal{Q}_i(t)=0$ and $Q_s(t)>0$, we can apply Lemma~3.2 of \cite{DaiWeiss}, which implies that $G(t)$ is also an absolutely continuous nonnegative function and such that $\dot{G}(t) < -\epsilon$ for some $\epsilon>0$, provided that $t$ is a point of differentiability of $G,G_1,\ldots,G_d$ such that $G(t)>0$. Applying Lemma~2.2 of \cite{DaiWeiss}, we obtain that $G(t)=0$ for all $t\ge \delta$, for some $\delta>0$.
Finally, we notice that $G(t)=0$ implies that the system is empty.

\section*{Proof of Proposition \ref{prop:symmetric}}

We have
\begin{align*}
\frac{1}{\lambda} \max_{i=1,\ldots,N} \rho_i
& = \max_{i=1,\ldots,N}  N p_{0,i} \E\left[ \eta_1  \wedge \tau_i\right]  +
N  p_{1,i} \sum_{j} 
p_{0,j} 
\pp(\eta_1> \tau_j) 
\E\left[ \eta_2 \mid \eta_1 >\tau_j\right] \\
& \ge \sum_{i}  p_{0,i} 
\left(
\E\left[ \eta_1  \wedge \tau_i\right]   + \pp(\eta_1> \tau_i)  \E\left[ \eta_2 \mid \eta_1 >\tau_i\right]
\right)\\
& \ge \min_{i} 
\E\left[ \eta_1  \wedge \tau_i\right]   + \pp(\eta_1> \tau_i)  \E\left[ \eta_2 \mid \eta_1 >\tau_i\right]
\\
& \ge \inf_{t\in\mathbb{R}_+} 
\left(
\E\left[ \eta_1  \wedge t\right]   + \pp(\eta_1> t)  \E\left[ \eta_2 \mid \eta_1 >t\right]
\right) = \inf_{t\in\mathbb{R}_+} 
\frac{\rho(t)}{\lambda}
\end{align*}
as desired.

\section*{Proof of Theorem~\ref{th3}}
Given $\tau\in\mathbb{R}_+$, we notice that 
\begin{align*}
\frac{\rho(\tau)}{\lambda} -\E[\eta_1]
& = \E[\eta_1\wedge \tau] + \pp(\eta_1>\tau) \E[\eta_2\mid \eta_1>\tau] -\E[\eta_1]\\ 
& = \tau\,\pp(\eta_1>\tau)  + \pp(\eta_1>\tau) \E[\eta_2\mid \eta_1>\tau]  -\E[\eta_1\mid \eta_1> \tau]\pp(\eta_1> \tau) < 0 
\end{align*}
if and only if \eqref{condition1} holds.

\section*{Proof of Theorem~\ref{th_cond2}}
Using Assumption~\ref{as_SX} and that $X$, $S_1$ and $S_2$ are independent, we obtain
\begin{align*}
& \frac{\rho(\tau)}{\lambda}
 = \E[(S_1 X)\wedge \tau] + \pp(S_1 X > \tau)\, \E[S_2 X\mid S_1 X>\tau] \\
& = \sum_x \E[(S_1 X) \wedge \tau \mid X=x] \, \pp(X=x)  + \E[S_2 X I_{\{S_1 X>\tau\}}]\\
& = \sum_x \E[(S_1 x) \wedge \tau] \, \pp(X=x)   + \sum_{x} \E[S_2 x I_{\{S_1 x>\tau\}} | X=x] \pp(X=x)\\
& = \sum_x  \E\left[S_1 x\wedge  \tau \right] \, \pp(X=x)   +   \sum_{x} \E[S_2x]\, \pp(X=x)  \pp( S_1 x>\tau)\\
& = \sum_x \pp(X=x) \left( \E\left[S_1 x \wedge \tau \right] \,  +  \E[S_2 x]  \pp( S_1 x>\tau) \right).
\end{align*}
Since $\E[\eta_1]=\E[X]\E[S]$, we notice that
\begin{equation}
\label{condition}
\E[\eta_1] - \frac{\rho(\tau)}{\lambda} 
 =
\sum_x  \pp(X=x) \Big( \E[S x] \,\pp( S x\le \tau)- \E\left[S x\wedge \tau \right] \Big)
\end{equation}
and within Assumption~\ref{as_z} it is clear that the RHS of \eqref{condition} is strictly greater than zero when $\tau=z$, for some~$z\in\mathbb{R}_+$.
The final part of the proposition is straightforward.

\section*{Optimal Stopping Formulation of Theorem \ref{th_optimal_timeout}}

First, we explain how the proof of Theorem \ref{th_optimal_timeout} can be expressed as an optimal stopping problem; in the following, stopping time and timeout are used interchangeably. Then, we show how the optimal timeout can be verified as an application of the one-step-lookahead principle. We also give a proof in continuous time and with deterministic stopping times. The more general proof  (continuous-time stochastic timeouts) is more involved and, due to space constraints, is included in the appendix.

If we wish to minimize the time spent by a job in the processing phase, then we must minimize
\begin{align} \label{Vload}
  V & = \min_{\tau \in \mathcal T} \mathbb E \left[ \eta_1 \wedge \tau + \eta_2 \mathbb I_{\left\{ \tau \leq \eta_1\right\}}
\right] \\
& =
\min_{\tau \in \mathcal T} \mathbb E \left[ \int_0^\tau \bar F_1(t) dt + \bareta_2(\tau) \bar F_1(\tau)\right]
\end{align}
where $\mathcal T$ is the set of stopping times on $\mathbb R_+$, the functions $f_1(t)$ and $\bar F_1(t)$ are respectively the pdf and ccdf of $\eta_1$, and $\bareta_2(t) \bydef \mathbb E [ \eta_2 \mid \eta_1 \geq t]$. 
The minimization \eqref{Vload} is an optimal stopping problem with continuation cost $\bar F_1(t)$ and stopping cost $\bar \eta_2(t)\bar F_1(t)$. 
We note that the objective function above is equal to $\rho(\tau) / \lambda$ where $\rho(\tau)$ is the induced load on the speculative queueing network for timeout $\tau$. Thus, we aim at finding the stopping time that minimizes the load. 



\medskip
\noindent \textbf{Discrete Time Argument.} 
What follows is a brief informal argument for why the stopping condition \eqref{stop} is correct.  
If we discretize time as ${\mathcal T}_\Delta:=\{ 0, \Delta, 2\Delta,...\}$ and if we are only allowed to stop on this restricted set of times (rather than $\mathcal T$), then the optimization for $V$ above is a discrete time Markov decision process. The Bellman equation for this problem is
\begin{align*}
  V(t) = \min\big\{  p_\Delta(t) [ \Delta +   V(t+\Delta) ], \bareta_2 (t) 
\big\}\, ,
\end{align*}
where we define
$  p_\Delta (t) := \mathbb P ( \eta_1 \geq  t+ \Delta | \eta_1 \geq t ) = 1 - \frac{f_1(t) \Delta}{\bar F_1(t)} + o(\Delta) \, .$

The one-step-look-ahead is known to be optimal for a wide class of optimal stopping rules \cite[Section 4.4]{bertsekas2011dynamic}. Here, we stop (or timeout) if it is better to stop now than continue one time step and then stop. In our case, it is easy to see that this corresponds to the condition:
\begin{align*}
\bareta_2(t) \leq p_\Delta(t) [ \Delta +   \bareta_2(t+\Delta) ] \, .
\end{align*}
The RHS term can be simplified:
\begin{align*}
&  p_\Delta(t) [ \Delta +   \bareta_2(t+\Delta) ] \\
& = \left( 1 - \frac{f_1(t) \Delta}{\bar F_1(t)} \right) \left[  \Delta + \Delta \bareta_2'(t) +   \bareta_2(t)\right] + o(\Delta) \\
&
=\bareta_2(t) + \Delta \left[ 1+ \bareta'_2(t) - \frac{f_1(t) }{\bar F_1(t)} \bareta_2(t)   \right] + o(\Delta) \, .
\end{align*}
Observing the term in square brackets above, we see that  up to terms of order $o(\Delta)$ the one-step look-ahead rule gives the condition to stop at $\tau$ whenever
\begin{align*}
  \frac{1+ \bareta'_2(\tau)}{\bareta_2(\tau)} \geq \frac{f_1(\tau) }{\bar F_1( \tau)}  \,.
\end{align*}
This gives the stopping rule stated in \eqref{stop}. For one-step look-ahead to be optimal, we require that this set is closed, meaning that whenever the stopping condition is satisfied it remains satisfied for all future times. This is the motivation for Assumption~\ref{OTas}. 

 \medskip
\noindent \textbf{Continuous Time Argument.} If we restrict ourselves to deterministic stopping times, then it is clear that $\tau$ is the optimal stopping time: by conditioning on the value of $\eta_1$, we have that the objective \eqref{Vload} satisfies
\begin{align*}
\mathbb E \left[ \eta_1 \wedge \tau + \eta_2 \mathbb I_{\left\{ \tau \leq \eta_1\right\}}
\right] = 
\int_0^\tau \bar F_1(t) dt + \bareta_2(\tau) \bar F_1(\tau)
\end{align*}
and a stationary point of this optimization satisfies
\begin{align*}
0& =  \bar F_1(\tau) + \bareta_2'(\tau) \bar F_1(\tau) - \bareta_2(\tau) f_1(\tau)\\
& = \bar F_1(\tau) \bareta_2(\tau)   \left[ \frac{1+ \bareta'_2(\tau)}{\bareta_2(\tau) } - \frac{f_1(\tau) }{\bar F_1( \tau)} \right] \, .
\end{align*}
Thus, under Assumption \ref{OTas}, the optimal stopping time condition is
\begin{align*}
  \frac{1+ \bareta'_2(\tau)}{\bareta_2(\tau) } = \frac{f_1(\tau) }{\bar F_1( \tau)} \,.
\end{align*}

\subsection*{Proof of Theorem \ref{th_optimal_timeout}}

We put together the stochastic discrete time argument with the deterministic continuous time argument developed above to give a formal proof. Towards this purpose, we must formulate the optimal stopping problem as a free boundary problem and then solve it for the optimal rule. The theory of free boundary problems and optimal stopping is given in detail in~\cite{peskir2006optimal}.

A value function must satisfy the following free boundary problem
\begin{subequations}\label{FB}
\begin{align}
& \bareta_2(t)  \geq V(t)\, , \label{FB2} \\
& 0 = 1+ V'(t) - \frac{f_1(t)}{\bar F_1(t)} V(t) \quad \text{ on } \quad \{ V(t) < \bareta_2(t)\} \, .\label{FB3}
\end{align}	
\end{subequations} 
It is show in Section 2.2 of \cite{peskir2006optimal} that a solution of \eqref{FB} defines the value of a policy that stops on the set $S=\inf\{ t:  V(t) \geq \bareta_2(t) \}$. 
Assuming that $V(t) < \bareta_2(t)$, the o.d.e. \eqref{FB3} can be solved as follows
\begin{align*}
&  0 = 1 + V'(t) - \frac{f_1(t)}{\bar F_1(t)} V(t)  \\
&\implies f_1(t) V(t) - \bar F_1(t) V'(t) = \bar F_1(t) \\
&\implies \frac{d }{dt} (-\bar F_1(t) V(t) ) = \bar F_1(t) \\
& \implies \bar F_1(t) V(t) = - \int_0^t \bar F_1(s)ds + A \\
& \implies V(t) = \frac{A}{\bar F_1(t)} - \frac{1}{\bar F_1(t)}\int_0^t \bar F_1(s)ds
\end{align*}
where $A$ is a constant which we will specify shortly. 

We now investigate times $\tau \in [0,\infty]$ where $V(\tau) = \bareta_2(\tau)$.
Substituting the above expression for $V(t)$, we notice that
\begin{equation*}
  \bareta_2(\tau) = \frac{A}{\bar F_1(\tau)} - \frac{1}{\bar F_1(\tau)} \int_0^\tau \bar F_1(s) ds\\
  \implies A = \bareta_2(\tau) \bar F_1(\tau) + \int_0^\tau \bar F_1(s) ds \, .
\end{equation*}
Thus, we see that the solutions to the free boundary problem, for which there exists a time with $V(\tau)=\bareta_2(\tau)$, take the form
\begin{align}
 &  V_\tau(t) = \bareta_2(\tau) \frac{\bar F_1(\tau)}{\bar F_1(t)} + \int_t^\tau \bar F_1(s)ds \notag \\
& = \bareta_2(\tau) +  \frac{1}{\bar F_1(t)}\int_t^\tau \bareta_2(\tau) f_1(s) + \bar F_1(s) ds \,,\quad t\leq \tau \, . \label{Vtag}
\end{align}
Here we write $V_\tau(t)$ to make the dependence on $\tau$ explicit. We now require the minimal solution, setting $t=0$ and differentiating with respect to $\tau$ gives
\begin{align*}
  \partial_\tau V_\tau(0) & = \bareta_2'(\tau)\bar F_1(\tau) - \bareta_2(\tau)  f_1(\tau) + \bar F_1(\tau)\\
  & =  \bareta_2(\tau) \bar F_1(\tau) \left[ \frac{1+ \bareta_2'(\tau)}{\bareta_2(\tau)} - \frac{f_1(\tau)}{\bar F_1(\tau) } \right]  \, .
\end{align*}
The term $\bareta_2(t) \bar F_1(\tau)$ is positive while the term in square brackets is monotone increasing from Assumption \ref{OTas}. 
Thus, from the term in square brackets above, we see that the condition is that $\tau^*$ is the minimal value such that the term in square brackets is positive. That is
\begin{align*}
\tau^* = \min\left\{ \tau \geq 0 :  \frac{1+ \bareta_2'(\tau)}{\bareta_2(\tau) } \geq \frac{f_1(\tau)}{\bar F_1(\tau) } \right\} \,.
\end{align*}
Thus \eqref{Vtag} with $\tau = \tau^*$ characterizes the solution up until time $\tau^*$. We note that for time $t> \tau^*$ the solution must satisfy $V(t)=\bareta_2(t)$. This can be seen because of the following argument. If $V(t)=\bareta_2(t)$ did not hold for all $t\geq \tau^*$ then we can choose a time $t$ such that $V(t)=\bareta_2(t)$, but the value function is strictly smaller immediately after time $t$; however, then under condition \eqref{FB3} 
\[
V'(t) = 1- \frac{f_1(t)}{\bar F_1(t)} V(t) = 1- \frac{f_1(t)}{\bar F_1(t)} \bareta_2(t) > \bareta'_2(t)
\]
where the inequality above holds since 
\begin{align*}
  \frac{1+ \bareta_2'(t)}{\bareta_2(t)} > \frac{f_1(t)}{\bar F_1(t)}
\end{align*}
for $t> \tau^*$. Thus we have $V'(t)> \bareta'_2(t)$. So we see that this leads to a contradiction since we assumed the function $V(t)$ decreases below $\bareta_2(t)$ immediately after time $t$. This proves that $V(t)=\bareta_2(t)$ for all $t\geq \tau^*$.
 
From this we see that the minimal value function solving the free boundary problem is 
\begin{align*}
  V(t) 
=
\begin{cases}
	\bareta_2(\tau^*) +  \frac{1}{\bar F_1(t)}\int_t^{\tau^*} [\bareta_2(\tau^*) f_1(s) + \bar F_1(s) ] ds
	&  t\leq \tau^* \\
\bareta_2(t) & t > \tau^* 
\end{cases}
\end{align*}
where $\tau^*$ is given above, which implies that the optimal stopping set is $S:=\{ t : V(t) = \bareta_2(t)\} = \{ t : t\geq \tau^*\}$. Thus, it is optimal to stop at time $\tau^*$ as required.  

Finally, we note that if $\eta_2$ is independent of $\eta_1$ then $\bareta_2(t)$ is a constant, $\E[\eta_2]$, and the optimal stopping condition reduces to the condition \eqref{indyTime}.

\end{document}